\newtheorem{theorem}{Theorem}[section]
\newtheorem{lemma}[theorem]{Lemma}
\newtheorem{definition}[theorem]{Definition}
\newtheorem{example}[theorem]{Example}
\newtheorem{proposition}[theorem]{Proposition}
\newtheorem{corollary}[theorem]{Corollary}
\newcommand{\even}[1]{\mbox{$#1^{\mathsf{even}}$}}
\newcommand{\ident}[1]{\mathsf{id}_{#1}}
\newcommand{\Nat}{{\bf Nat}}
\newcommand{\SET}[1]{\{ #1 \}}
\newcommand{\Fr}{\rightarrow}
\newcommand{\Rest}{{\upharpoonright}}
\newcommand{\La}{\lambda}
\newcommand{\Pfr}{\rightharpoonup}
\newcommand{\Deq}{\approx}
\newcommand{\Over}[1]{\overline{ #1 }}
\newcommand{\Sum}{\sum}
\newcommand{\THEN}{\; \Longrightarrow \;}
\newcommand{\IFF}{\;\mbox{ iff }\;}
\newcommand{\AND}{\; \wedge \;}
\newcommand{\ofcourse}{{!}}
\newcommand{\with}{\, \& \,}
\newcommand{\tensor}{\mbox{$\otimes$}}
\newcommand{\linimpl}{\multimap}
\newcommand{\Games}{\mbox{$\cal{G}$}}
\newcommand{\KG}{\mathbf{K}_{\ofcourse }({\cal G})}
\newcommand{\tunit}{I}
\newlength{\sqpreordheight}
\newlength{\sqpreorddepth}
\newcommand{\Subeq}%
   {\mathbin{\raisebox{-1.02ex}[\sqpreordheight][\sqpreorddepth]%
                            {$\stackrel{\textstyle \sqsubset}{\approx}$}}}
\newcommand{\ie}{\textit{i.e.}~}
\newcommand{\PP}{M^{\circledast}}
\newcommand{\just}{\mathsf{j}}
\newcommand{\pfn}{\Pfr}
\newcommand{\Init}{\mathsf{Init}}
\newcommand{\out}{\mathsf{out}}
\newcommand{\XF}{X^{\flat}}
\newcommand{\Ghf}{\mathcal{G}^{\mathsf{hf}}}
\newcommand{\GG}{\mathcal{G}}
\newcommand{\rarr}{\rightarrow}
\newcommand{\Kb}{\mathbf{K}_{!}}
\newcommand{\der}{\epsilon}
\newcommand{\LL}{\mathcal{L}}
\newcommand{\GL}{\GG_{\LL}}
\newcommand{\GHL}{\Ghf_{\LL}}
\newcommand{\lev}{\mathsf{lev}}
\newcommand{\Tl}{T_{\ell}}
\newcommand{\Tp}{T_{\ell'}}
\newcommand{\Nb}{\Nat^{\flat}_{\bot}}
\newcommand{\noflow}{\not\rarr}
\newcommand{\Lev}{\mathsf{Level}}
\newcommand{\lsem}{\llbracket}
\newcommand{\rsem}{\rrbracket}
\newcommand{\CC}{\mathcal{C}}
\newcommand{\phd}{\phi^{\bullet}}
\newcommand{\psd}{\psi^{\bullet}}
\newcommand{\shr}{\, {!}}
\newcommand{\Osk}{O_{\mathsf{sk}}}
\newcommand{\bind}{\mathbf{bind} \; x = e \; \mathbf{in} \; e'}
\title{Game Semantics for Access Control}
\author{Samson Abramsky and Radha Jagadeesan}
\begin{document}

\maketitle

\begin{abstract}
We introduce a semantic approach to the study of logics for access control and dependency analysis, based on Game Semantics. We use a variant of AJM games with explicit justification (but without pointers). Based on this, we give a simple and intuitive model of the information flow constraints underlying access control. This is used to give strikingly simple proofs of  \emph{non-interference theorems} in robust, semantic versions.
\end{abstract}

\section{Introduction}

In recent years, there has been a significant development of constructive logics and type theories for access control \cite{Abadi06,GargBBPR06,GP06}. The core structure of these logics has turned out to coincide in large part with a calculus previously developed as a basis for various forms of dependency analysis in programming languages \cite{AbadiBHR99}.
This structure can be described quite succinctly as follows. We take a standard type theory as a basic setting. This may be the simply-typed or polymorphic $\lambda$-calculus \cite{Abadi06}, or some form of linear type theory \cite{GargBBPR06}. Such type theories correspond to systems of logic under the Curry-Howard correspondence. We then extend the type theory with a family of monads, indexed by the elements of a ``security lattice'' $\LL$ \cite{Abadi06}.
This lattice can be interpreted in various ways. The basic reading is to think of the elements of the lattice as indicating \emph{security levels}. We shall follow the convention, as in \cite{Abadi06}, that a higher level (more trusted) is \emph{lower down} in the lattice ordering.

The reading which is often followed is to think of an underlying partially ordered set of ``principals'', with the lattice elements corresponding to sets of principals. This leads to the reading of $\Tl A$, where $\Tl$ is the monad indexed at the security level $\ell$, as ``$\ell$ says $A$''.
The monads are type-theoretic counterparts of logical modalities \cite{Kob97}; their use is well established both in logical type theories and in programming languages \cite{Mog91,FairtloughM94,AMPR01}.  We illustrate their use in the specification of access control with an example drawn from Garg and Abadi~\cite{GargA08}.
\begin{example}[~\cite{GargA08}]
Let there be two principals, Bob (a user) and admin (standing for administration).  Let dfile stand for the proposition that a certain file should be deleted. Consider the collection of assertions:
\begin{enumerate}
\item (admin says dfile) $\Rightarrow$ dfile
\item admin says ((Bob says dfile) $\Rightarrow$ dfile )
\item Bob says dfile
\end{enumerate}
Using the unit of the monad with (iii) yields  (admin says (Bob says dfile)).  Using modal consequence with (ii) yields:
\begin{itemize}
\item (admin says (Bob says dfile)) $\Rightarrow$ (admin says dfile)
\end{itemize}
dfile now follows using modus ponens.
\end{example}

The main results which are obtained in this setting, as a basis for access control or dependency analysis, are \emph{non-interference theorems}\footnote{The term `interference' is used in a number of senses in the security literature. Our usage follows that in \cite{Abadi06,GargBBPR06,GP06}.}, which guarantee the \emph{absence} of information flows or logical dependencies which would contradict the constraints expressed by the security lattice $\LL$. A typical example of a non-interference theorem could be expressed informally as follows:
\begin{quote}
No proof of a formula
of the form ``P says $\phi$'' can make any essential use of formulas of the
form ``Q says $\psi$'' unless Q is at the same or  higher security level as P.
In other words, we cannot rely on a lower standard of ``evidence'' or authorization in passing to a higher level.
\end{quote}
In the flow analysis context, it is natural to think of the constraints as ensuring that information does not flow from ``higher'' to ``lower'' variables \cite{AbadiBHR99}. We would then use the same definitions, and obtain the same results --- but with the opposite reading of the security lattice!

Thus far, access control logics have predominantly been studied using proof-theoretic methods.
Our aim in the present paper is to initiate a semantic approach based on \emph{Game Semantics}.

Game Semantics has been developed over the past 15~years as an approach to the semantics of both programming languages and logical type theories \cite{AJ92b,Abramsky2000,Hyland2000,AM97a,Lai97,AHM98,GM04,AGMOS04,Tzev07}. It has yielded numerous full abstraction and full completeness results, in many cases the only such results which have been obtained. There has also been an extensive development of algorithmic methods, with applications to verification \cite{GM00,Abr2002a,MOW05,MW05,LMOW}.

Our aim in the present paper is to show that Game Semantics provides an intuitive and illuminating account of access control, and moreover leads to strikingly simple and robust proofs of interference-freedom.

\paragraph{General Advantages of the Semantic Approach}
Proof-theoretic approaches to negative results such as non-interference properties necessarily proceed by induction over the proof system at hand. This embodies a ``closed world assumption'' that the universe is inductively generated by the syntax of the system, which means that each new system requires a new proof.
A semantic approach, which is carried out in a semantic framework capable of providing models for a wide range of systems, and which supplies \emph{positive reasons} --- structural properties and invariants of the semantic universe --- for the negative results, can be more general and more robust.
We shall give an illustrative example of how semantic non-interference results can be used to obtain results about syntactic calculi in Section~\ref{semsynsec}.

\paragraph{Specific Features of the Games Model}
A number of features of the version of Game semantics which we shall use in this paper are interesting in their own right, and will be developed further in future work.
\begin{itemize}
\item We shall introduce a novel version of AJM games \cite{Abramsky2000} which has a notion of \emph{justifier}, which will be used in the modelling of the access control constraints. This notion plays an important r\^ole in HO-games \cite{Hyland2000}, the other main variant of game semantics, but it assumes a much simpler form in the present setting (in particular, no pointers are needed). This is a first step in the development of a common framework which combines the best features of both styles of game semantics.
\item We also achieve a considerable simplification of the treatment of strategies in AJM games. In particular, we eliminate the need for an ``intensional equivalence'' on strategies \cite{Abramsky2000}.
\end{itemize}

The further contents of the paper are as follows. In Section~2, we develop our variant of AJM games. In Section~3, we describe the games model for access control. We give a semantic treatment of non-interference theorems in Section~4. The relation to syntax is discussed briefly in Section~5. Finally, in Section~6 we conclude with a discussion of directions for future work.

\section{Justified AJM Games}

In this section, we shall introduce a minor variant of AJM games \cite{Abramsky2000} which will provide a basis for our semantics of access control, while yielding a model of Intuitionistic Linear Logic and related languages isomorphic to that given by the usual AJM games.

We wish to refine AJM games by introducing a notion of \emph{justifying move}. A clear intuition for this notion can be given in terms of procedural control-flow. A call of procedure $P$ will have as its justifier the currently active call of the procedure in which $P$ was (statically) declared. Thus the justifier corresponds to the link in the ``static chain'' in compiler terminology for ALGOL-like languages \cite{AG98}. A top-level procedure call will have no justifier --- it will be an ``initial action''. Finally, a procedure return will have the corresponding call as its justifier.

In AJM games, moves are classified as \emph{questions} or \emph{answers}, and
equating procedure calls with questions and returns with answers, we get the appropriate notion of justifier for games.

The notion of justifier plays a central r\^ole in Hyland-Ong (HO) games \cite{Hyland2000}. In that context, the identification of the justifier of a move in a given play involves an additional structure of ``justification pointers'', which are a considerable complication. The need for this additional structure arises for two reasons:
\begin{itemize}
\item Firstly, the treatment of \emph{copying} in HO games allows multiple occurrences of the same move  in a given play. This means that extra structure is required to identify the ``threads'' corresponding to the different copies. By contrast, plays in AJM games are naturally \emph{linear}, \ie moves only occur once, with the threads for different copies indicated explicitly.
\item The other source of the need for explicit indication of justifiers in plays is that the justification or enabling relation is in general not functional in HO games; a given move may have several possible justifiers, and we must indicate explicitly which one applies. In fact, in the original version of HO games \cite{Hyland2000}, justification \emph{was} functional; the relaxation to more general enabling relations was introduced later for convenience \cite{McC98}, given that in the HO format, justification pointers were going to be used anyway.

It turns out that unique justifiers can be defined straightforwardly in AJM games; the only change which is required is a minor one to the definition of  linear implication. This follows the device used in \cite{Hyland2000} to preserve the functionality of justification.

\end{itemize}

\noindent Given that we have both linearity of plays and unique justifiers, we get a very simple, purely ``static'' notion of justification which is determined by the game, and requires no additional information at the level of plays. The resulting notion of AJM games is equivalent to the standard one as a model of ILL, but carries the additional structure needed to support our semantics for access control.

We shall now proceed to describe the category of justified AJM games. Since a detailed account of the standard AJM category can be found in \cite{Abramsky2000} and the differences are quite minor, we shall only provide a brief outline, emphasizing the points where something new happens.

\subsection{The Games}

A game is a structure $A = (M_A,\lambda_A, \just_{A}, P_A,{\Deq}_A)$, where
\begin{itemize}
\item $M_A$ is the set of moves.
\item $\lambda_A: M_A \rightarrow \{ P,O \} \times \SET{Q,A}$ is the
  labelling function.

  The labelling function indicates if a move is by Player (P) or
  Opponent (O), and if a move is a question (Q) or an answer (A). The
  idea is that questions correspond to requests for data or procedure calls, while
  answers correspond to data
  ({\em e.g.} integer or boolean values) or procedure returns. In a higher-order
  context, where arguments may be functions which may themselves be
  applied to arguments, all four combinations of Player/Opponent with
  Question/Answer are possible. Note that $\lambda_A$ can be decomposed into
  two functions $\lambda_{A}^{PO}:M_A\rightarrow\{P,O\}$ and
  $\lambda_{A}^{QA}:M_A\rightarrow\{Q,A\}$.

We write
\[  \{ P, O \} \times \{ Q, A \} = \{ PQ, PA, OQ, OA \} \]
\[ M_A^P =  \lambda_A^{-1}(\{P \}\times\SET{Q,A}), \quad
  M_A^O =
  \lambda_A^{-1}(\{O \}\times\SET{Q,A}) \]
\[ M_A^Q =
  \lambda_A^{-1}(\{P,O \}\times\SET{Q}), \quad
  M_A^A =
  \lambda_A^{-1}(\{P,O \}\times\SET{A})
\]
and define
\[  \overline{P} = O, \;\; \overline{O} = P, \]
\[  \overline{\La_{A}^{PO}}(a)= \overline{\La_{A}^{PO}(a)}, \;\;
  \Over{\La_{A}}= \langle \Over{\La_{A}^{PO}},\La_{A}^{QA} \rangle . \]

\item The justification function $\just_{A} : M_{A} \pfn M_{A}$ is a partial function on moves satisfying the following conditions:
\begin{itemize}
\item For each move $m$, for some $k>0$, $\just_{A}^{k}(m)$ is undefined, so that the forest of justifiers is well-founded. A move $m$ such that $\just_{A}(m)$ is undefined is called \emph{initial}; we write $\Init_{A}$ for the set of initial moves of $A$.
\item $P$-moves must be justified by $O$-moves, and vice versa; answers must be justified by questions.
\end{itemize}

\item Let $M_A^{\circledast}$ be the set of all finite sequences $s$ of
  moves satisfying the following conditions:
\begin{description}
\item[\textbf{(p1)} Opponent starts] If $s$ is non-empty, it starts with an O-move.
\item[\textbf{(p2)} Alternation] Moves in $s$ alternate between O and P.
\item[(p3) Linearity] Any move occurs at most once in $s$.
\item[(p4) Well-bracketing] Write each answer $a$ as $)_{a}$ and the corresponding question $q = \just_{A}(a)$ as $(_{a}$. Define the set $W$ of \emph{well-bracketed strings} over $A$ inductively as follows: $\varepsilon \in W$; $u \in W \; \Rightarrow \; (_{a} \, u  \, )_{a} \in W$; $u, v \in W \; \Rightarrow \; uv \in W$. Then we require that $s$ is a prefix of a string in $W$.
\item[(p5) Justification] If $m$ occurs in $s$, $s = s_{1}ms_{2}$, then the justifier $\just_{A}(m)$ must occur in $s_{1}$.
\end{description}

Then $P_A$, the set of  \emph{positions} of the game, is a
non-empty prefix-closed subset of $M_A^{\circledast}$.

The conditions {\bf (p1)}--{\bf (p5)} are  global rules
applying to all games.

\item
$\Deq_{A}$ is an equivalence relation on $P_{A}$ satisfying
\[ \begin{array}{ll}
\mbox{{\bf (e1)}} &  s \Deq_{A} t \THEN\ \lambda_{A}^{\star}(s) = \lambda_{A}^{\star}(t) \\
\mbox{{\bf (e2)}} &  s \Deq_{A} t, s' \sqsubseteq s, t' \sqsubseteq t,
|s'| = |t'|
\THEN  s' \Deq_{A} t' \\
\mbox{{\bf (e3)}} &  s \Deq_{A} t, sa \in P_{A} \THEN \exists b. \, sa \Deq_{A} tb .
\end{array} \]
Here $\lambda_{A}^{\star}$ is the extension of $\lambda_{A}$ to sequences; while $\sqsubseteq$ is the prefix ordering.
Note in particular that {\bf (e1)} implies that if $s \Deq_{A} t$, then $|s| = |t|$.
\end{itemize}

\noindent If we compare this definition to that of standard AJM games, the new component is the justification function $\just_{A}$. This allows a simpler statement of the well-bracketing condition.
The new conditions on plays are Linearity and Justification. These hold automatically for the interpretation of ILL types in AJM games as given in \cite{Abramsky2000}.

\subsection{Constructions}

We now describe the constructions on justified AJM games corresponding to the ILL connectives. These are all defined exactly as for the standard AJM games in \cite{Abramsky2000}, with justification carried along as a passenger and defined in the obvious componentwise fashion, with the sole exception of linear implication.
\begin{description}
\item[Times]
We define the tensor product $A \tensor B$ as follows.
\begin{itemize}
\item $M_{A \tensor B} = M_A + M_B$, the disjoint union of the two move
sets.
\item $\lambda_{A \tensor B} = [\lambda_A,\lambda_B]$, the source tupling.
\item $\just_{A \tensor B} = \just_{A} + \just_{B}$.
\item
$P_{A \tensor B} = \{ s \in \PP_{A \tensor B} \mid s \Rest A \in P_{A} \; \wedge \; s \Rest B \in P_{B} \}$.
\item $s \Deq_{A \tensor B} t \IFF  s \Rest A \Deq_{A} t \Rest_{A} \;
\wedge \; s \Rest B \Deq_{B} t \Rest B \; \wedge \; \out^{\star}(s) =
\out^{\star}(t).$
\end{itemize}
Here $\out : \Sigma_{i \in I} X_{i} \rightarrow I :: (x \in X_{i}) \mapsto i$ maps an element of a disjoint union to the index of its summand. Concretely in this case,  $\out(m)$ is 1 if $m \in M_{A}$, and 2 if $m \in M_{B}$.
Note that there is no need to formulate a `stack condition' explicitly as in \cite{Abramsky2000}, since this is implied by the component-wise definition of the justification function.
\item[Tensor Unit ]
The tensor unit is given by
\[ \tunit = (\varnothing,\varnothing, \varnothing, \{\epsilon \}, \{ (\epsilon , \epsilon )
\} ). \]
\item[Additive Conjunction]
The game $A \with B$ is defined as follows.
\begin{eqnarray*}
M_{A \with B} &=& M_A + M_B  \\
\lambda_{A \with B} &=& [\lambda_A,\lambda_B] \\
\just_{A \tensor B} & = & \just_{A} + \just_{B}\\
P_{A \with B}&=&  P_A + P_B  \\
\Deq_{A \with B}&=& \Deq_{A} + \Deq_{B} .
\end{eqnarray*}

\item[Bang]
The game $\ofcourse A$ is defined as the ``infinite symmetric tensor power''
of $A$.
The symmetry is built in via the equivalence relation on positions.

\begin{itemize}
\item $M_{\ofcourse A} =\omega\times M_A=\Sum_{i\in\omega} M_A $, the
  disjoint union of countably many copies of the moves of $A$.  So, moves
  of $\ofcourse A$ have the form $(i, m )$, where $i$ is a natural
  number, called the index, and $m$ is a move of $A$.
\item Labelling is by source tupling:
$$ \lambda_{\ofcourse A}(i, a) =  \lambda_A (a) . $$
\item Justification is componentwise: $\just_{\ofcourse A}(i, m) = (i, \just_{A}(m))$.
\item We write $s\Rest i$ to indicate the restriction to moves with
  index $i$.
  \[ P_{\ofcourse A}  = \{ s \in \PP_{\ofcourse A} \mid (\forall i \in \omega) \; s\Rest i \in P_A \} \, . \]

\item Let $S(\omega )$ be the set of permutations on $\omega$.
\[ s \Deq_{\ofcourse A} t \; \Longleftrightarrow \; ( \exists \pi \in
S(\omega ) )
[(\forall i \in \omega . \, s \Rest i \Deq_{A} t \Rest \pi (i) ) \;
\wedge \; (\pi \, \circ \, {\tt fst})^{\ast}(s) = {\tt fst}^{\ast}(t) ] .
\]

\end{itemize}

\item[Linear Implication]
The only subtlety arises in this case.
The justifier of an initial move in $A$ played within $A \linimpl B$ should be an initial move in $B$; but which one? To render this unambiguous, we make a disjoint copy of the moves in $A$ for each initial move in $B$. A similar device is used in \cite{Hyland2000}. We write  $\Sigma_{b \in \Init_{B}} M_A$ for this disjoint union of copies of $M_{A}$, which is equivalently defined as $\Init_{B} \times M_{A}$.
\begin{itemize}
\item $M_{A \linimpl B} = (\Sigma_{b \in \Init_{B}}M_A) + M_B$.
\item $\lambda_{A \linimpl B} = [[\overline{\lambda_{A}} \mid b \in \Init_{B}],\lambda_B]$.
\item We define justification by cases. We write $m_{b}$, for $m \in M_{A}$ and $b \in \Init_{B}$, for the $b$-th copy of $m$.
\[ \begin{array}{lcl}
\just_{A \linimpl B}(m_{b}) & = & \left\{ \begin{array}{ll}
b, & m \in \Init_{A} \\
(\just_{A}(m))_{b}, & m \not\in \Init_{A}
\end{array} \right. \\
\just_{A \linimpl B}(m) & = & \just_{B}(m), \qquad \quad \, m \in M_{B} \, .
\end{array}
\]
\item We write $s \Rest A$ to indicate the restriction to moves in  $\Sigma_{b \in \Init_{B}} M_{A}$, replacing each $m_{b}$ by $m$.
\[ P_{A \linimpl B} = \{ s \in \PP_{A \linimpl B} \mid s \Rest A \in P_{A} \; \wedge \; s \Rest B \in P_{B} \} \]
Note that Linearity for $A$ implies that \emph{only one copy} $m_{b}$ of each $m \in M_{A}$ can occur in any play $s \in P_{A \linimpl B}$.

\item $s \Deq_{A \linimpl B} t \IFF  (\forall b \in \Init_{B}) \, s \Rest A,b \Deq_{A} t \Rest A,b \;
\wedge \; s \Rest B \Deq_{B} t \Rest B \; \wedge \; \out^{\star}(s) =
\out^{\star}(t).$
\end{itemize}
Note that, by {\bf (p1)}, the first move in any position in
$P_{A \linimpl B}$ must be in $B$.

\end{description}

\paragraph{Basic Types} Given a set $X$, we define the \emph{flat game} $X^{\flat}$ over $X$ as follows:
\begin{itemize}
\item $M_{\XF} = \{ q_{0} \} + X$
\item $\lambda_{\XF}(q_{0}) = OQ$, $\lambda_{\XF}(x) = PA$ for $x \in X$.
\item $j_{\XF}(q_{0})$ is undefined (so $q_{0}$ is initial); $j_{\XF}(x) = q_{0}$.
\item Plays in $\XF$ are prefixes of sequences $q_{0}x$, $x \in X$.
\item The equivalence $\Deq_{\XF}$ is the identity relation.
\end{itemize}
For example, we obtain a game $\Nat^{\flat}$ for the natural numbers.

\subsection{Strategies}

Our aim is to present a reformulation of strategies for AJM games, which is equivalent to the standard account in \cite{Abramsky2000}, but offers several advantages:
\begin{itemize}
\item A major drawback of AJM games is that strategies must be quotiented by an equivalence to obtain a category with the required structure. This is workable, but lacks elegance and impedes intuition. \emph{This problem is completely eliminated here}: we present a notion of strategy which is `on the nose', without any quotient.
\item At the same time, the existing notions are related to the new approach, and the standard methods of defining AJM strategies can still be used.
\item Although we shall not elaborate on this here, the order-enriched structure of strategies is vastly simplified, since the ordering is now simply subset inclusion.
\end{itemize}
The idea of using strategies saturated under the equivalence relation on plays can be found in \cite{BDER97}; but that paper concerned a `relaxed' model, which could be used for Classical Linear Logic, and did not establish any relationship with the standard AJM notions.

\begin{definition}
A \emph{strategy} on a game $A$ is a non-empty set $\sigma \subseteq \even{P_A}$ of even-length plays satisfying the following conditions:
\begin{description}
\item[Causal Consistency]  $sab \in \sigma \THEN s \in \sigma$
\item[Representation Independence] $s \in \sigma \; \wedge \; s \Deq_{A} t \THEN t \in \sigma$
\item[Determinacy] $sab, ta'b' \in \sigma \; \wedge \; sa \Deq_{A} ta'  \THEN sab \Deq_{A} ta'b'$.
\end{description}

\end{definition}

\noindent To relate this to the usual notion of AJM strategies, we introduce the notion of \emph{skeleton}.
\begin{definition}
A \emph{skeleton} of a strategy $\sigma$ is a non-empty causally consistent subset $\phi \subseteq \sigma$ which  satisfies the following condition:
\begin{description}
\item[Uniformization] $\forall sab \in \sigma . \, s \in \phi \THEN \exists ! b' . \, sab' \in \phi$.
\end{description}
\end{definition}

\noindent Note that the play $sab'$ whose existence is asserted by Uniformization satisfies: $sab \Deq_{A} sab'$. This follows immediately from $\phi \subseteq \sigma$ and Determinacy.

\begin{proposition}
Let $\phi$ be a skeleton of a strategy $\sigma$. Then $\phi$ satisfies the following properties:
\begin{itemize}
\item \textbf{Functional Determinacy}: $sab, sac \in \sigma \THEN b=c$
\item \textbf{Functional Representation Independence}:
\[ sab \in \phi \AND t \in \phi \AND sa \Deq_{A} ta' \THEN \exists ! b' . \, ta'b' \in \phi \AND sab \Deq_{A} ta'b'  . \]
\end{itemize}
\end{proposition}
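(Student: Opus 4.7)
The overall plan is to derive both properties of the skeleton $\phi$ from the three axioms governing $\sigma$ (Causal Consistency, Representation Independence, Determinacy) together with the Uniformization clause defining $\phi$, using property \textbf{(e3)} of $\Deq_A$ as the bridge whenever we need to produce a witness play.

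For Functional Determinacy (reading the conclusion as $sab, sac \in \phi$), observe that since $\phi$ is causally consistent we have $s \in \phi$. Both $sab$ and $sac$ lie in $\phi \subseteq \sigma$, so Uniformization applied with $s \in \phi$ produces a \emph{unique} $b'$ with $sab' \in \phi$; since $b$ and $c$ both fit, $b=c$.

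For Functional Representation Independence, suppose $sab \in \phi$, $t \in \phi$, and $sa \Deq_A ta'$. The obstacle here is that we only have a $\Deq_A$-equivalence at odd length, whereas Uniformization needs an even-length member of $\sigma$ at the target. I would overcome this in three steps. First, apply \textbf{(e3)} to $sa \Deq_A ta'$ and the extension $sab \in P_A$ to obtain some $b''$ with $sab \Deq_A ta'b''$. Second, invoke Representation Independence of $\sigma$ on $sab \in \sigma$ to conclude $ta'b'' \in \sigma$. Third, apply Uniformization to the pair ($t \in \phi$, $ta'b'' \in \sigma$) to obtain a unique $b'$ with $ta'b' \in \phi$. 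To verify the required equivalence $sab \Deq_A ta'b'$, note that $ta'b', ta'b'' \in \sigma$ with $ta' \Deq_A ta'$, so Determinacy of $\sigma$ yields $ta'b' \Deq_A ta'b''$; transitivity with $sab \Deq_A ta'b''$ closes the argument. Uniqueness of $b'$ is immediate: any $\tilde{b}$ with $ta'\tilde{b} \in \phi$ witnesses the Uniformization clause for $t \in \phi$ and $ta'\tilde{b} \in \sigma$, so must coincide with $b'$.

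The main obstacle, as above, is purely notational but genuine: Uniformization is phrased as an even-length continuation property inside $\sigma$, while the hypothesis of Functional Representation Independence only hands us a \emph{one-sided} equivalence $sa \Deq_A ta'$. The role of \textbf{(e3)} is exactly to promote this to an equivalence $sab \Deq_A ta'b''$ for some $b''$, after which Representation Independence of $\sigma$ and Determinacy of $\sigma$ do all the work. No further use of the justification structure or of the specific constructions on games is needed; the proof is entirely formal in the axioms introduced for $\sigma$ and $\phi$.
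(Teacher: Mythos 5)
Your proof is correct and follows essentially the same route as the paper's: Functional Determinacy via Causal Consistency plus the uniqueness clause of Uniformization, and Functional Representation Independence via \textbf{(e3)}, Representation Independence of $\sigma$, Uniformization, and Determinacy of $\sigma$ to transfer the equivalence. The only cosmetic difference is that you unfold explicitly (using Determinacy of $\sigma$) the step the paper delegates to its remark following the definition of skeleton.
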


\begin{proof}
\begin{itemize}
\item Functional Determinacy. If $sab, sac \in \phi$, then $sab \in \sigma$ and $s \in \phi$. By Uniformization, the $b$ such that $sab$ is in $\phi$ is unique, so $b = c$.

\item Functional Representation Independence. Suppose that $sab \in \phi$, $t \in \phi$, $sa \Deq_{A} ta'$. Then $sab \in \sigma$, and by \textbf{(e3)}, for some $b''$, $sab \Deq_{A} ta'b''$. By Representation Independence, $ta'b'' \in \sigma$. By Uniformization, for some unique $b'$, $ta'b' \in \phi$, and by the remark after the definition of skeleton, $sab \Deq_{A} ta'b'$.
\end{itemize}
Hence we conclude.
\end{proof}

\begin{definition}
We shall define a skeleton more generally --- independently of any strategy --- to be a non-empty, causally consistent set of even-length plays, satisfying Functional Determinacy and Functional Representation Independence.
\end{definition}
Note that a skeleton $\phi$ is exactly the usual notion of AJM strategy such that $\phi \Subeq \phi$ \cite{Abramsky2000}.

Given a skeleton $\phi$, we define $\phd = \{ t \mid \exists s \in \phi . \, s \Deq_{A} t \}$.

\begin{proposition}
If $\phi$ is a skeleton, then $\phd$ is a strategy, and $\phi$ is a skeleton of $\phd$.
\end{proposition}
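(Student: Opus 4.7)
The plan is to verify, in turn, the three strategy axioms for $\phd$ and then the single nontrivial skeleton condition (Uniformization) for the pair $(\phi, \phd)$; the inclusion $\phi \subseteq \phd$ on which the latter depends is immediate from reflexivity of $\Deq_{A}$, and causal consistency of $\phi$ is part of the hypothesis. The workhorse throughout will be axiom \textbf{(e2)}, which propagates $\Deq_{A}$ to equal-length prefixes, combined with Functional Determinacy and Functional Representation Independence of $\phi$ furnished by the previous proposition.

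Non-emptiness of $\phd$ follows from $\phi \subseteq \phd$; Representation Independence is transitivity of $\Deq_{A}$. For Causal Consistency, given $sab \in \phd$ I pick a witness $u_{1} a_{0} b_{0} \in \phi$ with $u_{1} a_{0} b_{0} \Deq_{A} sab$; causal consistency of $\phi$ yields $u_{1} \in \phi$, and \textbf{(e2)} gives $u_{1} \Deq_{A} s$, whence $s \in \phd$. Determinacy is the main calculation: from $sab, ta'b' \in \phd$ with $sa \Deq_{A} ta'$ I pick witnesses $u_{1} a_{0} b_{0}, v_{1} a_{0}' b_{0}' \in \phi$ equivalent to $sab$ and $ta'b'$ respectively, use \textbf{(e2)} to derive $u_{1} a_{0} \Deq_{A} v_{1} a_{0}'$, and invoke Functional Representation Independence of $\phi$ on $u_{1} a_{0} b_{0}$ and $v_{1}$. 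This produces a unique $b^{\star}$ with $v_{1} a_{0}' b^{\star} \in \phi$ and $u_{1} a_{0} b_{0} \Deq_{A} v_{1} a_{0}' b^{\star}$; Functional Determinacy of $\phi$ forces $b^{\star} = b_{0}'$, and chaining the three equivalences delivers $sab \Deq_{A} ta'b'$.

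For $\phi$ as a skeleton of $\phd$, it remains to check Uniformization. Given $sab \in \phd$ with $s \in \phi$, I choose $u_{1} a_{0} b_{0} \in \phi$ with $u_{1} a_{0} b_{0} \Deq_{A} sab$, note that $u_{1} a_{0} \in \phi$ by causal consistency of $\phi$, extract $u_{1} a_{0} \Deq_{A} sa$ from \textbf{(e2)}, and apply Functional Representation Independence of $\phi$ to $u_{1} a_{0} b_{0}$ and $s$ to produce the required $b'$ with $sab' \in \phi$. Uniqueness of $b'$ is then immediate from Functional Determinacy of $\phi$.

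I do not anticipate a real obstacle: the preceding proposition has already carried out the delicate step of converting the relaxed axioms on general strategies into the functional versions valid on skeletons, so the present argument reduces to careful length-bookkeeping in \textbf{(e2)} and the reflexivity/transitivity of $\Deq_{A}$.
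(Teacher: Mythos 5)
Your proposal is correct and follows essentially the same route as the paper's own proof: witnesses in $\phi$, axiom \textbf{(e2)} to transfer $\Deq_{A}$ to prefixes, and the Functional Determinacy / Functional Representation Independence properties of skeletons to handle Determinacy and Uniformization. You spell out a few steps the paper leaves implicit (e.g.\ that the witness prefix $v_{1}$ lies in $\phi$ by causal consistency before invoking Functional Representation Independence), but there is no substantive difference.
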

\begin{proof}
We verify the conditions for $\phd$ to be a strategy.
\begin{itemize}
\item Causal Consistency. If $ta'b' \in \phd$, then for some $sab \in \phi$, $sab \Deq_{A} ta'b'$. Since $\phi$ is causally consistent, $s \in \phi$, and $s \Deq_{A} t$, hence $t \in \phd$.

\item Representation Independence. If $t \Deq_{A} s \in \phi$ and $u \Deq_{a} t$, then $u \Deq_{A} s$, and hence $u \in \phd$.

\item Determinacy. Suppose $sab, ta'b' \in \phd$. This means that $sab \Deq_{A} s_{1}a_{1}b_{1} \in \phi$, and $ta'b' \Deq_{A} s_{2}a_{2}b_{2} \in \phi$. Now $sa \Deq_{A} ta'$ implies that $s_{1}a_{1} \Deq_{A} s_{2}a_{2}$. By Functional Determinacy and Functional Representation Independence, $s_{1}a_{1}b_{1} \Deq_{A} s_{2}a_{2}b_{2}$.
Hence $sab \Deq_{A} ta'b'$, as required.
\end{itemize}

\noindent Now we verify that $\phi$ is a skeleton of $\phd$, \ie that Uniformization holds.
Suppose that $sab \in \phd$ and $s \in \phi$. By Functional Representation Independence, there is a unique $b'$ such that $sab' \in \phi$ and $sab \Deq_{A} sab'$. By Functional Determinacy, this is the unique $b'$ such that $sab' \in \phi$.
\end{proof}

\begin{proposition}
\label{skeldotprop}
If $\phi$ is a skeleton of $\sigma$, then $\phd = \sigma$.
\end{proposition}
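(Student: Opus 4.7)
The proof requires establishing two inclusions. The easy direction, $\phd \subseteq \sigma$, follows immediately: if $t \in \phd$, then there is $s \in \phi \subseteq \sigma$ with $s \Deq_{A} t$, so Representation Independence of $\sigma$ gives $t \in \sigma$.

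The substantive direction is $\sigma \subseteq \phd$, which I would establish by induction on the length of plays in $\sigma$. The base case is the empty play: since $\phi$ is non-empty and causally consistent, $\varepsilon \in \phi$, so $\varepsilon \in \phd$.

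For the inductive step, suppose $uab \in \sigma$ with $u$ of even length. Causal Consistency of $\sigma$ gives $u \in \sigma$, so by the induction hypothesis there is $u' \in \phi$ with $u \Deq_{A} u'$. Now I would chain together three axioms: first apply \textbf{(e3)} to $u \Deq_{A} u'$ and the extension $ua \in P_{A}$ to obtain $a'$ with $ua \Deq_{A} u'a'$; then apply \textbf{(e3)} again to $ua \Deq_{A} u'a'$ and $uab \in P_{A}$ to obtain $b''$ with $uab \Deq_{A} u'a'b''$. Representation Independence of $\sigma$ then yields $u'a'b'' \in \sigma$. Finally, since $\phi$ is a skeleton of $\sigma$ and $u' \in \phi$, Uniformization produces a (unique) $b'$ with $u'a'b' \in \phi$, and by the remark following the definition of skeleton, $u'a'b' \Deq_{A} u'a'b''$. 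Composing equivalences gives $u'a'b' \Deq_{A} uab$, so $uab \in \phd$.

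The main obstacle is keeping the bookkeeping of equivalent moves straight: one must not conflate the $b'$ produced by Uniformization (which lives in $\phi$) with the $b''$ produced by \textbf{(e3)} (which need only live in $\sigma$). Once the distinction is observed, the proof is a routine assembly of the axioms. No new structural properties beyond those already isolated in the preceding proposition are required.
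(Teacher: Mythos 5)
Your proof is correct and follows essentially the same route as the paper's: the easy inclusion via Representation Independence, and the converse by induction on length, transporting $uab$ to an equivalent play over a representative $u'\in\phi$ via \textbf{(e3)}, then invoking Uniformization and the remark after the definition of skeleton. The only difference is that you spell out the two applications of \textbf{(e3)} and the $b'$/$b''$ bookkeeping that the paper leaves implicit.
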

\begin{proof}
Certainly $\phd \subseteq \sigma$ by Representation Independence.
We prove the converse by induction on the length of $s \in \sigma$. The basis case for $\varepsilon$ is immediate. For $sab \in \sigma$, by induction hypothesis $s \in \phd$. Hence for some $s' \in \phi$, $s \Deq_{A} s'$. We can find $a'$, $b'$ such that $sab \Deq_{A} s'a'b'$. By Representation Independence, $s'a'b' \in \sigma$. By Uniformization, for some $b''$, $s'a'b'' \in \phi$, where $s'a'b' \Deq_{A} s'a'b''$. Hence $sab \in \phd$.
\end{proof}

\begin{corollary}
\label{dotcorr}
$\phi$ is a skeleton of $\sigma$ if and only if $\phi$ is a skeleton, and $\phd = \sigma$.
\end{corollary}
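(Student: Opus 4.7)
The corollary is a direct packaging of the two preceding propositions, so the plan is simply to assemble them, one per direction.

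For the forward direction, I would assume that $\phi$ is a skeleton of $\sigma$ in the sense of the first definition (non-empty, causally consistent, $\phi \subseteq \sigma$, and satisfying Uniformization). To conclude that $\phi$ is a skeleton in the stand-alone sense, I need Functional Determinacy and Functional Representation Independence, and these are exactly what the proposition immediately after the definition of skeleton delivers. The remaining equality $\phd = \sigma$ is Proposition~\ref{skeldotprop} applied verbatim.

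For the reverse direction, I would assume $\phi$ is a (stand-alone) skeleton with $\phd = \sigma$. The proposition preceding Proposition~\ref{skeldotprop} shows that $\phd$ is then a strategy and that $\phi$ is a skeleton of $\phd$; substituting $\sigma$ for $\phd$ gives that $\phi$ is a skeleton of $\sigma$.

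There is essentially no obstacle: both halves are one-line appeals to results already proved. The only thing to be careful about is that the two notions of ``skeleton'' being used (the relative one depending on $\sigma$, versus the absolute one given by Functional Determinacy plus Functional Representation Independence) are consistently distinguished, so that the implications invoked in each direction really match the hypotheses available.
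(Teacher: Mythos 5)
Your proof is correct and is exactly the argument the paper intends: the corollary is left without an explicit proof precisely because it is the immediate combination of the proposition establishing Functional Determinacy and Functional Representation Independence for skeletons of strategies, Proposition~\ref{skeldotprop} for the forward direction, and the proposition that $\phi$ is a skeleton of $\phd$ for the converse. Your care in distinguishing the relative and stand-alone notions of skeleton is the one point of substance, and you handle it properly.
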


\begin{proposition}
\label{existskelprop}
Every strategy $\sigma$ has a skeleton $\phi$.
\end{proposition}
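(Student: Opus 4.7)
The plan is to build the skeleton $\phi$ inductively, one level at a time, using the axiom of choice at each stage to pick, for every $\Deq_{A}$-equivalence class of potential extensions, a single canonical Player response.

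I would start with $\phi_{0} = \{\varepsilon\}$. Assuming inductively that $\phi_{n} \subseteq \sigma$ has been defined on plays of length $\leq 2n$ and already satisfies Causal Consistency, Functional Determinacy and Functional Representation Independence, the step to $\phi_{n+1}$ proceeds as follows. I consider the collection of pairs $(s,a)$ with $s \in \phi_{n}$ of length $2n$, $sa \in P_{A}$, and $sab \in \sigma$ for some $b$. I declare $(s,a) \sim (t,a')$ iff $sa \Deq_{A} ta'$; this is an equivalence relation, and by \textbf{(e2)} it already forces $s \Deq_{A} t$. For each $\sim$-class $C$, I pick (by choice) a representative $(s_{C},a_{C})$ together with a $b_{C}$ such that $s_{C}a_{C}b_{C} \in \sigma$. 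Then, for every $(t,a') \in C$, applying \textbf{(e3)} to $s_{C}a_{C} \Deq_{A} ta'$ and $s_{C}a_{C}b_{C} \in P_{A}$, I choose $b'_{t,a'}$ with $s_{C}a_{C}b_{C} \Deq_{A} ta'b'_{t,a'}$; Representation Independence of $\sigma$ then gives $ta'b'_{t,a'} \in \sigma$. Taking $\phi_{n+1}$ to be $\phi_{n}$ together with all plays $ta'b'_{t,a'}$ produced in this way, and setting $\phi = \bigcup_{n} \phi_{n}$, completes the construction.

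The verification is essentially routine. Non-emptiness, Causal Consistency, and the inclusion $\phi \subseteq \sigma$ are immediate from the construction. Functional Determinacy holds because every pair $(s,a)$ lies in a unique $\sim$-class and receives exactly one chosen response $b'_{s,a}$. For Functional Representation Independence, if $sab \in \phi$, $t \in \phi$, and $sa \Deq_{A} ta'$, then $(s,a)$ and $(t,a')$ live in the same class $C$, so $sab$ and $ta'b'_{t,a'}$ are both $\Deq_{A}$-equivalent to $s_{C}a_{C}b_{C}$, hence to each other; uniqueness of $b'_{t,a'}$ then follows from Functional Determinacy.

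The main delicate point is coherence between the responses chosen for different pairs $(s,a)$ that are $\sim$-related: this is precisely what Functional Representation Independence demands. Grouping these pairs into a single $\sim$-class and making one global choice $(s_{C},a_{C},b_{C})$ per class converts the coherence requirement into a single application of choice per class, so the obstacle dissolves. A more abstract alternative would be to apply Zorn's Lemma to the poset of "partial" skeletons of $\sigma$ ordered by inclusion, noting that unions of chains remain partial skeletons and arguing that a maximal such must be a full skeleton by the same local extension step; but the explicit inductive construction is more transparent and makes the use of choice visible.
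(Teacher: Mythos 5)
Your construction is correct, and at bottom it is the same level-by-level induction with a choice made at each stage that the paper uses; the difference is that you do noticeably more work than the paper needs to. The paper simply forms, for each $s \in \phi_{k}$ and O-move $a$, the set $X_{s,a}$ of all $sab \in \sigma$, observes these sets are pairwise disjoint, and takes an arbitrary choice set selecting one element of each non-empty $X_{s,a}$ --- with no attempt whatsoever to coordinate the choices across $\Deq_{A}$-equivalent pairs $(s,a)$. This suffices because a skeleton \emph{of a strategy} is defined only by non-emptiness, causal consistency and Uniformization (exactly one chosen response per $(s,a)$), which the arbitrary choice set gives immediately; Functional Determinacy and Functional Representation Independence are then derived for free in the earlier proposition, the essential point being that the Determinacy axiom on $\sigma$ itself already forces any two responses in $\sigma$ to $\Deq_{A}$-equivalent positions to be $\Deq_{A}$-equivalent. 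Your per-class representative $(s_{C},a_{C},b_{C})$ and the propagation via \textbf{(e3)} and Representation Independence therefore re-prove, inside the construction, a coherence property that holds automatically for \emph{every} uniformizing choice. Your version is sound, and has the mild virtue of yielding a set that visibly satisfies the stand-alone skeleton conditions without appeal to that earlier proposition; but the equivalence-class machinery is dispensable, and the ``main delicate point'' you identify dissolves not because of the grouping but because of $\sigma$'s Determinacy condition.
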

\begin{proof}
We define a family of sets of plays $\phi_{k}$ by induction on $k$. $\phi_{0} = \{ \varepsilon \}$.
To define $\phi_{k+1}$, for each $s \in \phi_{k}$ and $a \in M_{A}^{O}$, consider the set $X_{s,a}$ of all plays in $\sigma$ of the form $sab$ for some $b$. Note that $(s, a) \neq (s', a')$ implies that $X_{s,a} \cap X_{s', a'} = \varnothing$. Let $C$ be the family of all non-empty $X_{s,a}$. Then $\phi_{k+1}$ is a choice set for $C$ which selects exactly one element of each member of $C$.
Finally, define $\phi = \bigcup_{k \in \omega}  \phi_{k}$.
It is immediate from the construction that $\phi$ is a skeleton for $\sigma$.
\end{proof}

We now recall the definition of the preorder on skeletons (standard AJM strategies) from \cite{Abramsky2000}:
\[ \phi\Subeq\psi \; \equiv \;
sab\in\phi, s'\in \psi, sa
  \Deq s'a' \THEN \exists b'.\ [s'a'b'\in\psi \AND sab\Deq s'a'b'].
\]

\begin{proposition}
\label{preordprop}
For skeletons $\phi$, $\psi$ on $A$:
$\phi \Subeq \psi\IFF \phd \subseteq \psd$.
\end{proposition}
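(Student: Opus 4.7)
The plan is to prove both implications separately, with the forward direction ($\Rightarrow$) requiring an induction and the backward direction ($\Leftarrow$) being a direct application of Functional Representation Independence.

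For $\phi \Subeq \psi \THEN \phd \subseteq \psd$, I would take an arbitrary $t \in \phd$, pick a witness $s \in \phi$ with $s \Deq_A t$, and prove by induction on $|s|$ that there exists $s' \in \psi$ with $s' \Deq_A s$, hence $s' \Deq_A t$ and $t \in \psd$. The base case $s = \varepsilon$ is immediate since causal consistency of $\psi$ together with non-emptiness forces $\varepsilon \in \psi$. For the inductive step $s = s_1 a b \in \phi$, causal consistency of $\phi$ gives $s_1 \in \phi$, so by the induction hypothesis there is $s_1' \in \psi$ with $s_1 \Deq_A s_1'$. Applying \textbf{(e3)} to $s_1 \Deq_A s_1'$ and the legal extension $s_1 a$ yields some $a'$ with $s_1 a \Deq_A s_1' a'$. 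At this point I invoke the hypothesis $\phi \Subeq \psi$ with $sab = s_1 a b$ and $s' = s_1'$ to obtain $b'$ with $s_1' a' b' \in \psi$ and $s_1 a b \Deq_A s_1' a' b'$, closing the induction.

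For $\phd \subseteq \psd \THEN \phi \Subeq \psi$, I would take $sab \in \phi$, $s' \in \psi$ and $sa \Deq_A s'a'$ as in the definition of $\Subeq$. Since $sab \in \phi \subseteq \phd \subseteq \psd$, there exists $s'' a'' b'' \in \psi$ with $sab \Deq_A s'' a'' b''$; in particular $s'' a'' \Deq_A sa \Deq_A s' a'$. Now apply Functional Representation Independence of the skeleton $\psi$ to $s'' a'' b'' \in \psi$, $s' \in \psi$, and $s'' a'' \Deq_A s' a'$: this produces the required $b'$ with $s' a' b' \in \psi$ and $s'' a'' b'' \Deq_A s' a' b'$, whence $sab \Deq_A s' a' b'$ by transitivity.

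The main obstacle is really only the forward direction, where one must be careful not to confuse the skeleton $\phi$ with its saturation $\phd$: the hypothesis $\phi \Subeq \psi$ only speaks about extensions of plays already in $\phi$, so one cannot apply it directly to an arbitrary $t \in \phd$. The induction on length is precisely what bridges this gap, using \textbf{(e3)} at each step to transfer the $O$-move from a $\phi$-play to the chosen $\psi$-representative. The backward direction is essentially bookkeeping once one recognises that Functional Representation Independence is the skeleton-level counterpart of the $\Subeq$-condition.
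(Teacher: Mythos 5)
Your proof is correct and follows essentially the same route as the paper: the backward direction uses saturation plus Functional Representation Independence of $\psi$, and the forward direction reduces to showing $\phi \subseteq \psd$ by induction on play length, using \textbf{(e3)} to transfer the $O$-move before invoking $\phi \Subeq \psi$. The only cosmetic difference is that you spell out explicitly why $\phi \subseteq \psd$ suffices (transitivity of $\Deq_A$), which the paper leaves implicit.
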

\begin{proof}
Suppose firstly that $\phd \subseteq \psd$, and that $sab \in \phi$, $t \in \psi$, $sa \Deq_{A} ta'$.
Then $sab \in \psd$, so there is some $s_{1}a_{1}b_{1} \in \psi$ with $s_{1}a_{1}b_{1} \Deq_{A} sab$.
Then $s_{1}a_{1} \Deq_{A} ta'$, so by Functional Representation Independence, there exists a unique $b'$ such that $ta'b' \in \psi$, and $ta'b' \Deq_{A} s_{1}a_{1}b_{1} \Deq_{A} sab$. Thus $\phi \Subeq \psi$, as required.

For the converse, we assume that $\phi \Subeq \psi$. It is sufficient to prove that $\phi \subseteq \psd$, which we do by induction on the length of plays in $\phi$. The base case is immediate. Now suppose that $sab \in \phi$. By induction hypothesis, $s \in \psd$. Then for some $s' \in \psi$, $s \Deq_{A} s'$. For some $a'$, $s'a' \Deq_{A} sa$. Since $\phi \Subeq \psi$, there exists $b'$ such that $s'a'b'\in\psi \AND sab\Deq s'a'b'$. But then $sab \in \psd$, as required.
\end{proof}

We can now obtain a rather clear picture of the relationship between partial equivalence classes of strategies as in \cite{Abramsky2000}, and strategies and their skeletons in the present formulation.

\begin{proposition}
For  any strategy $\sigma$:
\begin{enumerate}
\item Any two skeletons of  $\sigma$ are equivalent.
\item $\sigma = \bigcup \{ \phi \mid \mbox{$\phi$ is a skeleton on $\sigma$} \}$.
\item For any skeleton $\phi$ of $\sigma$:
$\sigma = \bigcup \{ \psi \mid \psi \Deq \phi \}$.
\end{enumerate}

\end{proposition}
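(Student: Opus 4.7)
The plan is to derive all three parts from the representation results already established, with only a small strengthening of Proposition~\ref{existskelprop}.

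\emph{Part~1.} If $\phi$ and $\psi$ are both skeletons of $\sigma$, Proposition~\ref{skeldotprop} gives $\phd = \sigma = \psd$, and Proposition~\ref{preordprop} then yields both $\phi \Subeq \psi$ and $\psi \Subeq \phi$, that is, $\phi \Deq \psi$.

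\emph{Part~2.} The inclusion from right to left is immediate, since by definition every skeleton of $\sigma$ is a subset of $\sigma$. For the reverse inclusion, given $s \in \sigma$ I exhibit a skeleton of $\sigma$ containing $s$ by adapting the inductive construction in the proof of Proposition~\ref{existskelprop}. Writing $s = a_{1}b_{1} \cdots a_{n}b_{n}$, I pre-commit choices along the chain of prefixes of $s$: at stage $k+1$, when the pair under consideration is $(s \Rest 2k, a_{k+1})$, I select the element $s \Rest 2(k+1)$ from the (necessarily nonempty, by Causal Consistency of $\sigma$) set $X_{s\Rest 2k,\,a_{k+1}}$; for all other nonempty $X_{s',a}$, any choice is made. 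Since the sets $X_{s',a}$ for distinct $(s',a)$ are disjoint, these pre-commitments do not conflict with the remaining choices, and the resulting $\phi = \bigcup_{k} \phi_{k}$ is a skeleton of $\sigma$ by exactly the verification of Proposition~\ref{existskelprop}; by construction $s \in \phi$.

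\emph{Part~3.} The inclusion from right to left follows because any $\psi$ with $\psi \Deq \phi$ is itself a skeleton and satisfies $\psd = \phd = \sigma$ by Proposition~\ref{preordprop}, hence $\psi \subseteq \psd = \sigma$. The converse combines Parts~1 and~2: given $s \in \sigma$, Part~2 produces a skeleton $\psi$ of $\sigma$ with $s \in \psi$, and Part~1 yields $\psi \Deq \phi$.

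The only substantive step is the parametrized strengthening of the existence proof in Part~2, but this is routine: because the choice sets $X_{s',a}$ for distinct $(s',a)$ are pairwise disjoint, pre-committing along the finite chain of prefixes of a target play $s$ cannot interfere with the rest of the construction, so I do not anticipate any real obstacle.
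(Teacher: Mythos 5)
Your proposal is correct and follows essentially the same route as the paper: Part~1 via Propositions~\ref{skeldotprop} and~\ref{preordprop}, Part~2 by guiding the choices in the construction of Proposition~\ref{existskelprop} so that the target play is selected (the paper states this in one sentence; you merely make the pre-commitment along the prefix chain and its compatibility via disjointness of the $X_{s,a}$ explicit), and Part~3 by combining the earlier parts with Proposition~\ref{preordprop}.
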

\begin{proof}
1. If $\phi$, $\psi$ are skeletons of $\sigma$, by Proposition~\ref{skeldotprop}  $\phd = \psd$, hence by Proposition~\ref{preordprop}, $\phi \Subeq \psi$ and $\psi \Subeq \phi$.

2. The right-to-left inclusion is clear. For the converse, given any $s \in \sigma$, we can guide the choices made in the construction of $\phi$ in the proof of Proposition~\ref{existskelprop} to ensure that $s \in \phi$.

3. The right-to-left inclusion follows from Proposition~\ref{preordprop}. The converse follows from part 2, Corollary~\ref{dotcorr}, and Proposition~\ref{preordprop}.
\end{proof}

Finally, we show how \emph{history-free strategies}, which play an important r\^ole in AJM game semantics, fit into the new scheme.
We define a strategy to be \emph{history-free} if it has a skeleton $\phi$ satisfying the following additional conditions:
\begin{itemize}
\item $sab, tac \in \phi \THEN b=c$
\item $sab, t\in\phi, ta\in P_A \THEN tab \in \phi$.
\end{itemize}
As in \cite{Abramsky2000}, a skeleton is history-free if and only if it is generated by a function on moves.

\paragraph{Constructions on strategies}
In the light of these results, we can define a strategy $\sigma$ by defining a skeleton $\phi$ and then taking $\sigma = \phd$. In particular, this is the evident method for defining history-free strategies. Thus all the constructions of particular strategies carry over directly from \cite{Abramsky2000}.

What of operations on strategies, such as composition, tensor product etc.? We can define an operation $O$ on strategies via an operation $\Osk$ on skeletons as follows. Given strategies $\sigma$, $\tau$, we take skeletons $\phi$ of $\sigma$ and $\psi$ of $\tau$, and define
\[ O(\sigma, \tau) = \Osk(\phi,\psi)^{\bullet} . \]
Of course, this definition should be independent of the choice of skeletons.

\begin{proposition}
An operation $O$ `defined' in terms of $\Osk$ as above is well-defined and monotone with respect to subset inclusion if and only if $\Osk$ is monotone with respect to $\Subeq$.
\end{proposition}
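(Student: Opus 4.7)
The plan is to exploit the bridge between the preorder $\Subeq$ on skeletons and subset inclusion on strategies supplied by Proposition~\ref{preordprop}, together with the fact (from part~1 of the immediately preceding proposition) that any two skeletons of the same strategy are $\Deq$-equivalent. Once these are in hand, each direction of the biconditional is essentially a direct translation between the two orderings.

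For the direction $(\Leftarrow)$, assume $\Osk$ is monotone in $\Subeq$. To show that $O$ is well-defined, suppose $(\phi_1,\psi_1)$ and $(\phi_2,\psi_2)$ are two choices of skeletons for $\sigma$ and $\tau$. Then $\phi_1 \Deq \phi_2$ and $\psi_1 \Deq \psi_2$, so applying monotonicity of $\Osk$ to each of the $\Subeq$ relations in turn yields $\Osk(\phi_1,\psi_1) \Deq \Osk(\phi_2,\psi_2)$, and two applications of Proposition~\ref{preordprop} give $\Osk(\phi_1,\psi_1)^{\bullet} = \Osk(\phi_2,\psi_2)^{\bullet}$. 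For monotonicity of $O$: given $\sigma \subseteq \sigma'$ and $\tau \subseteq \tau'$, pick any skeletons $\phi$ of $\sigma$, $\phi'$ of $\sigma'$, $\psi$ of $\tau$, $\psi'$ of $\tau'$; then $\phi^{\bullet} = \sigma \subseteq \sigma' = (\phi')^{\bullet}$, so Proposition~\ref{preordprop} gives $\phi \Subeq \phi'$, and similarly $\psi \Subeq \psi'$. Monotonicity of $\Osk$ then yields $\Osk(\phi,\psi) \Subeq \Osk(\phi',\psi')$, which by Proposition~\ref{preordprop} translates to $O(\sigma,\tau) \subseteq O(\sigma',\tau')$.

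For the direction $(\Rightarrow)$, assume $O$ is well-defined and monotone in $\subseteq$. Given $\phi \Subeq \phi'$ and $\psi \Subeq \psi'$, set $\sigma = \phi^{\bullet}$, $\sigma' = (\phi')^{\bullet}$, $\tau = \psi^{\bullet}$, $\tau' = (\psi')^{\bullet}$. Since $\phi$ is a skeleton of $\sigma$ (and similarly for the other three), well-definedness gives $O(\sigma,\tau) = \Osk(\phi,\psi)^{\bullet}$ and $O(\sigma',\tau') = \Osk(\phi',\psi')^{\bullet}$. By Proposition~\ref{preordprop}, $\sigma \subseteq \sigma'$ and $\tau \subseteq \tau'$; hence by monotonicity of $O$, $\Osk(\phi,\psi)^{\bullet} \subseteq \Osk(\phi',\psi')^{\bullet}$, and Proposition~\ref{preordprop} once more delivers $\Osk(\phi,\psi) \Subeq \Osk(\phi',\psi')$.

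The argument is essentially bookkeeping around Proposition~\ref{preordprop}; the main subtlety to keep in mind is the implicit assumption that $\Osk$ sends pairs of skeletons to skeletons, so that the $(-)^{\bullet}$ operation makes sense on its output and the translation between $\Subeq$ and $\subseteq$ is available on both sides. This is a standing hypothesis on the operations $\Osk$ we consider, and must be separately verified for each concrete construction (composition, tensor product, etc.).
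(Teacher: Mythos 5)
Your proof is correct and follows essentially the same route as the paper: both directions are bookkeeping via Proposition~\ref{preordprop}, with the only cosmetic difference being that the paper obtains well-definedness as the special case $\sigma=\sigma'$, $\tau=\tau'$ of the monotonicity argument, whereas you argue it separately via the equivalence of any two skeletons of the same strategy (which amounts to the same thing). Your closing remark that $\Osk$ must be assumed to send skeletons to skeletons is a fair observation of a hypothesis the paper leaves implicit.
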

\begin{proof}
Suppose that $\Osk$ is monotone with respect to $\Subeq$ and that we are given $\sigma \subseteq \sigma'$, $\tau \subseteq \tau'$, and skeletons $\phi$ of $\sigma$, $\phi'$ of $\sigma'$, $\psi$ of $\tau$, $\psi'$ of $\tau'$. By Proposition~\ref{preordprop},  $\phi \Subeq \phi'$ and $\psi \Subeq \psi'$. Then $\Osk(\phi, \psi) \Subeq \Osk(\phi',\psi')$, and so, using Proposition~\ref{preordprop} again,
\[ O(\sigma, \tau) = \Osk(\phi, \psi)^{\bullet} \subseteq \Osk(\phi', \psi')^{\bullet} = O(\sigma', \tau') . \]
Note that, taking $\sigma = \sigma'$ and $\tau = \tau'$, this also shows that $O$ is well-defined.

\noindent Conversely, suppose that $O$ is well-defined and monotone, and consider $\phi \Subeq \phi'$, $\psi \Subeq \psi'$. Let $\sigma = \phd$, $\sigma'=\phi'^{\bullet}$, $\tau = \psd$, $\tau'=\psi'^{\bullet}$. By Proposition~\ref{preordprop}, $\sigma \subseteq \sigma'$ and $\tau \subseteq \tau'$.
Then
\[ \Osk(\phi, \psi)^{\bullet} = O(\sigma, \tau)   \subseteq O(\sigma', \tau') = \Osk(\phi', \psi')^{\bullet}    \]
so by Proposition~\ref{preordprop}, $\Osk(\phi, \psi) \Subeq \Osk(\phi',\psi')$.
\end{proof}

Thus again, all the operations on strategies from \cite{Abramsky2000} carry over to the new scheme.

\subsection{Categories of Games}
We build a category \Games:
\begin{eqnarray*}
\mbox{Objects} &:& \mbox{Justified AJM Games} \\
\mbox{Morphisms} &:& \mbox{$\sigma:A\Fr B$ $\equiv$ strategies on $A \linimpl B$.}
\end{eqnarray*}

\paragraph{Copy-Cat}
The \emph{copy-cat strategy} \cite{AJ92b}
is defined by:
\[ \ident{A} = \{ s \in P^{\tt even}_{A \linimpl A} \mid \; s
{\upharpoonright} 1 \Deq_{A} s {\upharpoonright} 2 \} . \]

\paragraph{Composition}
We  need a slight modification of the definition of composition as given in \cite{AJ92b,Abramsky2000}, to fit the revised definition of linear implication.
Suppose we are given $\sigma: A
\rightarrow B$ and
$\tau: B \rightarrow C$. We define:
\[ \begin{array}{ccl}
\sigma \| \tau & = & \{ s \in M_{(A \linimpl B) \linimpl C}^{\star} \mid
s \Rest A, B \in \sigma \; \wedge \; s \Rest B, C \in \tau \} \\
\sigma; \tau & = & \{ s {\upharpoonright} A,C \mid \; s \in \sigma \| \tau
 \} \, .
\end{array} \]
Note that
\[ \begin{array}{lcl}
M_{(A \linimpl B) \linimpl C} & = & (\Sigma_{c \in \Init_{C}} (\Sigma_{b \in \Init_{B}} M_{A}) + M_{B}) + M_{C} \\
& \cong & \Sigma_{c \in \Init_{C}} (\Sigma_{b \in \Init_{B}} M_{A}) \; + \; \Sigma_{c \in \Init_{C}} M_{B}  \; + \;  M_{C}
\end{array}
\]
so we can regard $s\Rest B, C$ as a play in $B \linimpl C$. Similarly, $s \Rest A, B$, where we erase all tags from $C$, can be regarded as a play in $A \linimpl B$. Finally, $s \Rest A, C$, where all tags from $B$ are erased, so that $(m_{b})_{c}$ is replaced by $m_{c}$, can be regarded as a play in $A \linimpl C$. This last transformation, erasing tags in $B$, corresponds to the elision of justification pointers in the definition of composition for HO games \cite{Hyland2000}.

\begin{proposition}
\Games\ equipped with composition of strategies, and with the copy-cat strategies $\ident{A} : A \rightarrow A$ as identities, is a category. There is also a sub-category $\Ghf$ of history-free strategies.
\end{proposition}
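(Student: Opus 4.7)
The plan is to reduce every category axiom to the corresponding fact at the level of skeletons (which coincide with the standard AJM strategies of \cite{Abramsky2000}), and then lift it to strategies via the correspondence established by Proposition~\ref{preordprop} and Corollary~\ref{dotcorr}. Concretely, I would define skeleton-level operations $\mathsf{comp}_{\mathsf{sk}}(\phi,\psi) = \phi;\psi$ and a copy-cat skeleton $\chi_{A}$ on $A\linimpl A$ (the history-free skeleton generated by the obvious partial move-function that copies each $O$-move to the corresponding component), using exactly the formulas already given for $\sigma;\tau$ and $\ident{A}$. Standard AJM, modulo the book-keeping for the new tags $m_b$, shows that these operations are monotone with respect to $\Subeq$ and satisfy associativity and the identity laws up to $\Deq$.

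Next I would verify that composition and copy-cat as defined on strategies really do arise from these skeleton operations in the sense of the preceding proposition, i.e.\ that for skeletons $\phi$ of $\sigma$ and $\psi$ of $\tau$ one has $\sigma;\tau = (\phi;\psi)^{\bullet}$ and $\ident{A} = \chi_{A}^{\bullet}$. One inclusion follows from Representation Independence of $\sigma$ and $\tau$ together with the fact that a play in the interaction set $\phi\|\psi$ is a play in $\sigma\|\tau$. For the other direction, given a play in $\sigma\|\tau$ one repeatedly applies Uniformization on each side, walking up the interaction sequence, to produce an $\Deq$-equivalent play in $\phi\|\psi$; projection to $A,C$ then witnesses membership in $(\phi;\psi)^{\bullet}$. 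The copy-cat case is an immediate unfolding.

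With this in hand, the category axioms transfer without further work. Associativity at skeleton level, $(\phi;\psi);\chi \Deq \phi;(\psi;\chi)$, gives via Proposition~\ref{preordprop} that $((\phi;\psi);\chi)^{\bullet} = (\phi;(\psi;\chi))^{\bullet}$, which unfolds to $(\sigma;\tau);\rho = \sigma;(\tau;\rho)$ on the nose. The unit laws $\chi_{A};\psi \Deq \psi$ and $\phi;\chi_{B}\Deq \phi$ give $\ident{A};\tau = \tau$ and $\sigma;\ident{B} = \sigma$ in the same way. The well-definedness and monotonicity clauses of the previous proposition handle the fact that these identities do not depend on the choice of skeleton representatives.

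For the sub-category $\Ghf$, the skeleton $\chi_{A}$ is history-free by inspection, and standard AJM shows that the composition of two history-free skeletons is again history-free, obtained by the usual "interaction of functions on moves" construction. Hence history-free strategies are closed under composition and contain all identities, giving the sub-category $\Ghf$. The main obstacle is purely book-keeping: one must check that the revised linear implication, where $A$-moves in $A\linimpl B$ carry an extra tag $b\in \Init_{B}$, plays well with composition. The isomorphism displayed just before the proposition reorganises $M_{(A\linimpl B)\linimpl C}$ so that the erasure of the middle-$B$ tag during projection to $A\linimpl C$ is exactly analogous to the elision of justification pointers in HO composition \cite{Hyland2000}; since tagging is determined statically by the game, the AJM proofs go through verbatim under this bookkeeping.
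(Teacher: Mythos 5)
Your proof is correct and follows essentially the route the paper intends: the paper states this proposition without an explicit proof, relying on the preceding skeleton/strategy correspondence (Proposition~\ref{preordprop}, Corollary~\ref{dotcorr}, and the well-definedness/monotonicity criterion for operations defined via $\Osk$) to import associativity, the unit laws, and closure of history-free strategies from standard AJM games, which is exactly what you do. The one step you make explicit that the paper leaves implicit --- verifying that the directly defined $\sigma;\tau$ and $\ident{A}$ coincide with $(\phi;\psi)^{\bullet}$ and the saturation of a copy-cat skeleton --- is a genuine and correctly handled necessity, so no gap remains.
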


\noindent The further development of these categories as models of ILL proceeds exactly as in \cite{Abramsky2000}.
The main point to note is that monoidal closure still works, in exactly the same way as in \cite{Abramsky2000}.
Indeed, we have an isomorphism
\[ (A \otimes B) \linimpl C \cong A \linimpl (B \linimpl C) \]
induced concretely by the bijection on moves
\[ \begin{array}{lcl}
M_{(A \otimes B) \linimpl \, C} & = & \Sigma_{c \in \Init_{C}} (M_{A} + M_{B}) + M_{C} \\
& \cong & \Sigma_{c \in \Init_{C}} M_{A} + (\Sigma_{c \in \Init_{C}} M_{B} + M_{C}) \\
& = & M_{A \linimpl \, (B \linimpl \, C)}
\end{array}
\]
since the initial moves in $B \linimpl C$ are just those in $C$. Since arrows are defined as strategies on the internal hom, this immediately yields the required adjunction
\[ \GG(A \otimes B, C) \cong \GG(A, B \linimpl C) . \]
The monoidal structure for $\otimes$ is witnessed similarly by copy-cat strategies induced by bijections on move sets, as in \cite{AJ92b}. Thus we get a symmetric monoidal closed (SMC) category $(\GG, {\otimes}, I, {\linimpl})$, with an SMC sub-category $\Ghf$.

Next we note that there are natural transformations
\[ \der_{A} : \ofcourse A \rightarrow A, \qquad \delta_{A} : \, !A \rightarrow \, !!A \]
and a functorial action for $!$ which endow it with the structure of a comonad. The counit $\der$ plays copycat between $A$ and one fixed index in $!A$, while the comultiplication $\delta$ uses a pairing function
\[ p : \omega \times \omega \rightarrowtail \omega  \]
to play copycat between pairs of indices in $!!A$ and indices in $!A$. The functorial action $!\sigma : \, !A \rightarrow \, !B$ simply plays $\sigma$ componentwise in each index. The coding-dependence in these constructions is factored out by the equivalence $\Deq$.

The co-Kleisli category $\Kb(\GG)$ for this comonad has arrows $!A \rarr B$, with identities given by the counits $\der_{A}$. Composition is defined via \emph{promotion}: given $\sigma : !A \rarr B$, we define
\[ \sigma^{\dagger} = \delta_{A} ; !\sigma : \, !A \rarr \, !B . \]
The Kleisli composition of $\sigma$ with $\tau : !B \rarr C$ is then $\sigma^{\dagger} ; \tau : \, !A \rarr C$.

The additive conjunction is the product in the coKleisli category, while $I$ is the terminal object. There are  \emph{exponential isomorphisms}
\[ !(A \with B) \cong \; !A \, \otimes \, !B, \qquad !I \cong I. \]
This ensures that the coKleisli category is cartesian closed: defining $A \Rightarrow B = \, !A \linimpl B$,
we have
\[ \begin{array}{lcl}
\KG (A \with B, C) & = & \Games (\ofcourse (A \with B), \, C) \\
& \cong & \Games (\ofcourse A \, \tensor\,  \ofcourse B , \, C) \\
& \cong & \Games (\ofcourse A , \, \ofcourse B \linimpl C) \\
& = & \KG (A, \, B \Rightarrow C).
\end{array} \]
Thus we have a symmetric monoidal closed category $(\GG, {\otimes}, I, {\linimpl})$ and a cartesian closed category $(\KG, {\with}, I, {\Rightarrow})$. There is, automatically, an adjunction between $\GG$ and its coKleisli category $\KG$. This adjunction is moreover monoidal by virtue of the exponential isomorphisms. This provides exactly the required structure for a model of ILL \cite{Bie95,MMPR}. Moreover, all this structure cuts down to the history-free strategies.  The interpretation of the ILL type theory lives inside the history-free sub-category $\Ghf$.

\section{The Model}
We shall now show how a simple refinement of the games model leads to a semantics for access control.

We shall assume as given a security semilattice $(\LL, {\sqcup}, {\bot})$, where $\sqcup$ is the binary join, and $\bot$ the least element. The partial order on the semilattice is defined by
\[ \ell \leq \ell' \; \equiv \; \ell \sqcup \ell' = \ell'. \]

We shall now form a category $\GL$, with a history-free sub-category $\GHL$, as follows. The objects of $\GL$ have the form
\[ A = (M_A,\lambda_A, \just_{A}, P_A,{\Deq}_A, \lev_{A}) \]
where $(M_A,\lambda_A, \just_{A}, P_A,{\Deq}_A)$ is a justified AJM game, and $\lev_{A} : M_{A} \rarr \LL$ assigns a security level to each move of the game. This additional piece of structure is carried through the type constructions in the simplest componentwise fashion:
\[ \lev_{A \otimes B} = \lev_{A \& B} = [\lev_{A}, \lev_{B}], \quad \lev_{\shr A} = [ \lev_{A} \mid i \in \omega] \]\[ \quad \lev_{A \linimpl B} = [ [\lev_{A} \mid b \in \Init_{B}], \lev_{B} ] . \]
The remainder of the definition of $\GL$  goes exactly as for $\GG$, with a single additional condition on plays in the definition of $\PP_{A}$:
\begin{description}
\item[(p6) Levels] A non-initial move $m$ can only be played if $\lev_{A}(m) \leq \lev_{A}(\just_{A}(m))$.
\end{description}
This constraint has a clear motivation, reflecting the basic intuition for access control: a principal can only affirm a proposition at its own level of authorization based on \emph{assertions made at the same level or higher}.
In terms of control flow (where the lattice has the opposite interpretation):  a procedure can only perform an action \emph{at its own security level or lower}.

Note that formally, this is a purely static constraint: it is used to discard certain moves (actions) at the level of the game (type), and independent of any particular play (run) or strategy (term).
This is remarkably simple, yet as we shall see, it suffices to soundly model the formal properties of the type theories which have been proposed for access control.

The content of this constraint is essentially the same as that described at a more concrete level in \cite{Mal99}.\footnote{This connection was pointed out to us by one of the referees.} What we have achieved here is to express this in a general, compositional form at the level of the semantic model. This allows general non-interference results to be proved, whereas the focus in \cite{Mal99} is on static analysis of specific programs.

\subsection{Level Monads}
For each AJM game $A$ and $\ell \in \LL$, there is a game $A_{\ell}$ in $\GL$ with $\lev_{A}(m) = \ell$ for all $m \in M_{A}$. Note that, fixing $\ell$, the assignment $A \mapsto A_{\ell}$ defines a full and faithful embedding of $\GG$ in $\GL$. The interesting structure of $\GL$ as a model for access control arises when there are moves at different levels.

We now define, for each $\ell \in \LL$, a construction $\Tl$ on games. This acts only on the level assignment, as follows:
\[ \lev_{\Tl A}(m) = \lev_{A}(m) \sqcup \ell . \]
All other components of $A$ remain unchanged in $\Tl A$. Note in particular that $P_{\Tl A} = P_{A}$. We must check that the Level condition \textbf{(p6)} is satisfied by plays  $s \in P_{A}$ with respect to $\lev_{\Tl A}$. This holds since $s$ satisfies \textbf{(p6)} with respect to $\lev_{A}$, and
\[ \ell_{1} \leq \ell_{2} \; \Rightarrow \; \ell_{1} \sqcup \ell \leq  \ell_{2} \sqcup \ell . \]

\noindent The following commutation properties of $\Tl$ are immediate.
\begin{proposition}
\label{commprop}
The following equations hold:
\[ \begin{array}{lcl}
\Tl I & = & I \\
\Tl (A \otimes B) & = & \Tl A \otimes \Tl B \\
\Tl (A \linimpl B) & = & \Tl A \linimpl \Tl B \\
\Tl (A \with B) & = & \Tl A \with \Tl B \\
\Tl \shr A & = & \shr \Tl A \\
\Tl (A \Rightarrow B)  & = &  \Tl A \Rightarrow \Tl B
\end{array}
\]
\end{proposition}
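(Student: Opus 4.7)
By construction $\Tl$ alters only the level assignment of a game, leaving $M_A$, $\lambda_A$, $\just_A$, $P_A$, and $\Deq_A$ unchanged. Since every type constructor in $\GL$ is defined componentwise on these five components (inheriting them verbatim from the $\GG$ definitions), the underlying justified AJM games on both sides of each equation coincide by a purely syntactic unfolding. So my main task is to verify, constructor by constructor, that the two level assignments agree.

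I would tackle the listed constructors in turn. For $\Tl I = I$, the move set is empty and there is nothing to check. For each binary constructor, the level function on the compound game is a componentwise tupling: $[\lev_A, \lev_B]$ for $\otimes$ and $\with$, and $[[\lev_A \mid b \in \Init_B], \lev_B]$ for $\linimpl$. On an arbitrary move $m$ originating from the $A$-side, both sides of the equation reduce to $\lev_A(m) \sqcup \ell$, using only the definition of $\Tl$ and the componentwise level rules listed just above Proposition~\ref{commprop}; the $B$-side is symmetric. The $\shr$ case is identical on a move of the form $(i,m)$, since $\lev_{\shr A}(i,m) = \lev_A(m)$. Finally, $\Tl(A \Rightarrow B) = \Tl A \Rightarrow \Tl B$ is derived, not primitive: it follows from the $\shr$ and $\linimpl$ clauses via $A \Rightarrow B = \shr A \linimpl B$.

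The one place I expect bookkeeping to be needed is the consistency of the position sets, which is the main obstacle. By stipulation $P_{\Tl A} = P_A$, so each side of an equation has positions either literally inherited from $P_{A \circ B}$ (the left-hand side, via the definition of $\Tl$) or computed afresh by applying the constructor to $\Tl A$ and $\Tl B$ (the right-hand side). Because the two sides agree on levels by the computation of the previous paragraph, the condition \textbf{(p6)}, which depends only on the level assignment, gives the same constraint, so the ambient sets $\PP$ coincide. The componentwise side-conditions $s \Rest A \in P_A$ and $s \Rest B \in P_B$ then already fold in the original (p6) via the components, and are literally the same conditions on both sides since $P_{\Tl A} = P_A$. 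This check handles $\otimes$, $\with$, $\shr$, and $\linimpl$ uniformly, and the six equations follow.
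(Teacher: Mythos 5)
Your level computations are correct, and so is the reduction of the $\Rightarrow$ equation to the $\shr$ and $\linimpl$ cases; note that the paper offers no argument at all here (the equations are declared ``immediate''), so there is no official proof to diverge from. The gap is in your final paragraph, in the claim that the componentwise restriction conditions ``fold in'' every instance of \textbf{(p6)} uniformly for $\otimes$, $\with$, $\shr$ and $\linimpl$. That claim is true for $\otimes$, $\with$ and $\shr$, because there $\just$ is a coproduct of the component justification functions: every non-initial move and its justifier lie in the \emph{same} component, so the relevant instance of \textbf{(p6)} is literally an instance of \textbf{(p6)} for that component and is implied by $s \Rest A \in P_A$ (resp.\ $s \Rest B \in P_B$, $s \Rest i \in P_A$). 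It fails for $\linimpl$: a move $m_b$ with $m \in \Init_A$ is \emph{non-initial} in $A \linimpl B$ and is justified by $b \in \Init_B$ in the \emph{other} component, giving the cross-component constraint $\lev_A(m) \leq \lev_B(b)$. This constraint is invisible to $s \Rest A \in P_A$ and $s \Rest B \in P_B$, since $m$ and $b$ are initial, hence exempt from \textbf{(p6)}, in their respective component games.

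This is not a cosmetic omission, because $x \sqcup \ell \leq y \sqcup \ell$ does not imply $x \leq y$, so the ambient sets $\PP_{A \linimpl B}$ and $\PP_{\Tl A \linimpl \Tl B}$ genuinely differ on exactly these cross-component instances even after intersecting with the restriction conditions. Concretely, take $\ell \neq \bot$, $A = \XF_{\ell}$ and $B = Y^{\flat}_{\bot}$ flat games at levels $\ell$ and $\bot$. In $A \linimpl B$ the initial $A$-question is barred by \textbf{(p6)} (since $\ell \not\leq \bot$), and $P_{\Tl(A \linimpl B)} = P_{A \linimpl B}$ by the stipulation in the definition of $\Tl$; but in $\Tl A \linimpl \Tl B$ both initial moves sit at level $\ell$, so the two-move play consisting of the initial $B$-question followed by the initial $A$-question is a position of $\Tl A \linimpl \Tl B$ that is not a position of $\Tl(A \linimpl B)$. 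Your argument, together with the monotonicity of $- \sqcup \ell$, does deliver the inclusion $P_{\Tl(A \linimpl B)} \subseteq P_{\Tl A \linimpl \Tl B}$, but the reverse inclusion --- the only non-trivial content of the third and sixth equations --- is not established by the uniform treatment and, with the definitions exactly as given, does not hold. Any complete proof (or repair, e.g.\ recomputing $P_{\Tl A}$ rather than stipulating $P_{\Tl A} = P_A$) must confront the cross-component justifications of $\linimpl$ explicitly instead of handling it alongside the other connectives.
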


The semilattice structure on $\LL$ acts on the $\LL$-indexed family of monads in the evident fashion:
\begin{proposition}
\label{actprop}
The following equations hold:
\[ \begin{array}{lcl}
\Tl (T_{\ell'} A) & = & T_{\ell \sqcup \ell'} A \\
T_{\bot} A & = & A .
\end{array}
\]
\end{proposition}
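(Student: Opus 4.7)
The plan is a straightforward verification, using the fact that $\Tl$ only alters the level assignment of a game and leaves all other components ($M_A$, $\lambda_A$, $\just_A$, $P_A$, $\Deq_A$) untouched. Consequently, to establish equality of two games obtained by applying different $T$-operators, it suffices to show that their level assignments agree pointwise on $M_A$.

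For the first equation, I would unfold the definition of $\lev$ twice: for any $m \in M_A$,
\[ \lev_{\Tl(T_{\ell'} A)}(m) = \lev_{T_{\ell'} A}(m) \sqcup \ell = (\lev_A(m) \sqcup \ell') \sqcup \ell . \]
Associativity and commutativity of the semilattice join $\sqcup$ then give $\lev_A(m) \sqcup (\ell \sqcup \ell') = \lev_{T_{\ell \sqcup \ell'} A}(m)$, as required. For the second equation, using that $\bot$ is the least element of $(\LL, \sqcup, \bot)$ we have $\lev_{T_\bot A}(m) = \lev_A(m) \sqcup \bot = \lev_A(m)$, so the level assignment of $T_\bot A$ coincides with that of $A$.

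Since the remaining data of the games are definitionally identical, and the plays satisfying condition \textbf{(p6)} with respect to one level assignment automatically satisfy it with respect to any equal level assignment, both equations hold on the nose. There is no real obstacle here; the only point worth emphasizing is that the proof relies solely on the semilattice axioms (associativity, commutativity, and $\bot$ being the identity), which is precisely what makes $\{\Tl\}_{\ell \in \LL}$ behave as a monoid action on the category, consistent with the monad structure used for access control.
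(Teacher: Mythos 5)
Your verification is correct and matches the paper's (implicit) reasoning: the paper states these equations without proof, calling them evident, and the evident argument is exactly yours — $\Tl$ changes only the level assignment, so both identities reduce to the pointwise semilattice computations $(\lev_A(m) \sqcup \ell') \sqcup \ell = \lev_A(m) \sqcup (\ell \sqcup \ell')$ and $\lev_A(m) \sqcup \bot = \lev_A(m)$. Your additional remark that condition \textbf{(p6)} is insensitive to replacing a level assignment by an equal one is a sensible touch of completeness, though trivial.
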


\noindent We can extend each $\Tl$ with a functorial action: if $\sigma : A \rarr B$ then we can define $\Tl \sigma : \Tl A \rarr \Tl B$ simply by taking $\Tl \sigma = \sigma$. To justify this, note that
\[ P_{A \linimpl B} = P_{\Tl (A \linimpl B)} = P_{\Tl A \linimpl \Tl B} , \]
using Proposition~\ref{commprop}.
Hence $\sigma$ is a well-defined strategy for  $\Tl A \linimpl \Tl B$.

\begin{proposition}
\label{cclevprop}
The copy-cat strategy is well defined on $A \linimpl \Tl A$.
\end{proposition}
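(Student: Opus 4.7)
The plan is to show that the set
\[ \ident{A} = \{ s \in P^{\tt even}_{A \linimpl \Tl A} \mid s \Rest 1 \Deq_A s \Rest 2 \} \]
is a valid strategy on $A \linimpl \Tl A$. Because $\Tl A$ agrees with $A$ on $M_A$, $\just_A$, $\lambda_A$, $P_A$ and $\Deq_A$, differing only in the level assignment, the moves, justifiers, polarities, plays and equivalence of $A \linimpl \Tl A$ literally coincide with those of $A \linimpl A$. All three strategy axioms therefore transfer directly from the standard copy-cat, and the only new obligation is to check condition \textbf{(p6)} for each $s \in \ident{A}$ with respect to $\lev_{A \linimpl \Tl A} = [[\lev_A \mid b \in \Init_{\Tl A}], \lev_{\Tl A}]$.

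I would split the (p6) check by the location of the move relative to its justifier. If a non-initial move $m$ sits on the right, both $m$ and $\just(m)$ carry the extra $\sqcup \ell$, so (p6) for $A \linimpl \Tl A$ coincides with (p6) for $\Tl A$, which holds because $s \Rest 2 \in P_{\Tl A} = P_A$. If a move $m_b$ sits on the left with $m$ non-initial in $A$, its justifier $(\just_A(m))_b$ is also on the left, and both levels are unmodified $\lev_A$-values, so (p6) reduces to (p6) in $A$, which holds because $s \Rest 1 \in P_A$.

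The only subtle case is a P-move $m_b$ on the left with $m$ initial in $A$ (a P-move because polarity flips on the left); its justifier in $A \linimpl \Tl A$ is the initial O-move $b$ on the right, and one must show $\lev_A(m) \leq \lev_A(b) \sqcup \ell$. In copy-cat, P plays $m_b$ only as the response to O's opening with $b$, and at that point the defining condition $s \Rest 1 \Deq_A s \Rest 2$ specialises to $(m) \Deq_A (b)$ as length-one plays in $A$. The main obstacle is then to infer $\lev_A(m) = \lev_A(b)$ from this equivalence; this follows by inspection of the game constructions in Section~2, since $\Deq_A$ never relates single-move plays whose underlying $A$-move differs (it only permutes indices in $\shr A$ and transports tags through $\linimpl$ and $\tensor$). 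Semilattice monotonicity then yields $\lev_A(m) = \lev_A(b) \leq \lev_A(b) \sqcup \ell$, giving (p6). Equivalently, one can define copy-cat via its standard skeleton, in which P literally copies $b$, and invoke Proposition~\ref{skeldotprop} to lift to $\ident{A} = \phi^{\bullet}$, where the level check is already immediate at the skeleton level.
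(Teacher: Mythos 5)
Your proof follows essentially the same route as the paper's: a case analysis on where a copy-cat move sits relative to its justifier, reducing \textbf{(p6)} in $A \linimpl \Tl A$ to the Level condition already holding in $P_A = P_{\Tl A}$ for the two homogeneous cases, and to monotonicity of $- \sqcup \ell$ for the cross-component case. The one place you diverge is the initial-move case: by working with the saturated set directly you must handle a response $m_b$ where $m$ is merely $\Deq_A$-equivalent to O's opening $b$, not equal to it. The paper avoids this entirely by checking only the verbatim copy-cat response, whose justifier is the very occurrence it copies, so the check collapses to $\lev_A(m) \leq \lev_{\Tl A}(m) = \lev_A(m) \sqcup \ell$. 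Your resolution of the extra case is the weak point: the claim that $\Deq_A$ never relates single-move plays with distinct underlying moves is false as stated (in $\shr A$, the plays $(0,m)$ and $(1,m')$ with $(m) \Deq_A (m')$ are equivalent single-move plays on distinct moves), and the fact you actually need --- that $\Deq_A$-equivalent moves carry equal levels --- is not an axiom of $\GL$-objects (condition \textbf{(e1)} constrains only $\lambda_A$, not $\lev_A$), so it would require its own induction over the type constructions. Your closing remark is the correct fix and is what the paper implicitly does: define copy-cat by its verbatim skeleton $\phi$, verify \textbf{(p6)} there (where it is immediate), and pass to $\phi^{\bullet}$, which lies in $P_{A \linimpl \Tl A}$ by construction.
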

\begin{proof}
Consider a play of the copy-cat strategy
\[ \begin{array}{lccc}
& A & \linimpl & \Tl A \\
& \vdots & & \vdots \\
O & & & m_{1} \\
P & m_{1} & & \\
O & m_{2} & & \\
P & & & m_{2}
\end{array}
\]
which we write as $s = s_{1}m_{1}'m_{1}''m_{2}''m_{2}'$.
If $m_{1}$ is initial, $\lev_{A}(m_{1}) \leq \lev_{\Tl A}(m_{1})$, so the Level condition holds for $m_{1}''$.
If $m_{1}$ is non-initial, by Justification $m_{1}'$ is preceded by its justifier $m$ in $s \Rest \Tl A$. Since $s \Rest \Tl A \in P_{\Tl A} = P_{A}$, $\lev_{A}(m_{1}) \leq \lev_{A}(m)$, so $m_{1}''$ satisfies the Level condition in this case as well. Finally,
\[ \lev_{A}(m_{2}) \leq \lev_{A}(\just_{A}(m_{2})) \; \Rightarrow \; \lev_{\Tl A}(m_{2}) \leq \lev_{\Tl A}(\just_{A}(m_{2})) \]
so $m_{2}'$ satisfies the Level condition.
\end{proof}

\noindent Thus we can define a natural transformation
\[ \eta_{A} : A \rarr \Tl A \]
where $\eta_{A}$ is the copy-cat strategy.
Furthermore, by Proposition~\ref{actprop}, $\Tl \Tl A = \Tl A$.
Thus we obtain:
\begin{proposition}
Each $\Tl$ is an idempotent commutative monad.
\end{proposition}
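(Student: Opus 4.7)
The plan is to recognise that both idempotence and commutativity become essentially trivial once the monad data is expressed via copy-cat strategies, thanks to the strict equalities recorded in Propositions~\ref{commprop} and~\ref{actprop}. First I would fix the functorial action: for $\sigma : A \rarr B$, set $\Tl \sigma = \sigma$, which makes sense as a strategy on $\Tl A \linimpl \Tl B$ by Proposition~\ref{commprop} and the observation already made in the text. The functor laws are tautological because the action is the identity on underlying strategies. The unit $\eta_A : A \rarr \Tl A$ is the copy-cat strategy, well-defined by Proposition~\ref{cclevprop}, and naturality of $\eta$ reduces to the fact that copy-cat acts as an identity under composition.

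The key step is the multiplication. Since $\Tl \Tl A = T_{\ell \sqcup \ell} A = \Tl A$ by Proposition~\ref{actprop} together with the semilattice idempotency $\ell \sqcup \ell = \ell$, we can define $\mu_A = \ident{\Tl A}$. The three monad laws $\mu \circ \Tl \eta = \ident{\Tl A}$, $\mu \circ \eta_{\Tl A} = \ident{\Tl A}$, and the associativity of $\mu$ then collapse to identity laws for copy-cat. Idempotence is immediate: $\mu_A$, being copy-cat, is its own inverse and hence an isomorphism.

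For commutativity (in Kock's sense) I would exhibit the tensorial strength $t_{A,B} : A \tensor \Tl B \rarr \Tl(A \tensor B)$ by using $\Tl(A \tensor B) = \Tl A \tensor \Tl B$ from Proposition~\ref{commprop} to set $t_{A,B} = \eta_A \tensor \ident{\Tl B}$; the dual costrength is $\ident{\Tl A} \tensor \eta_B$. Both iterated composites $\Tl A \tensor \Tl B \rarr \Tl(A \tensor B)$ evaluate to $\eta_A \tensor \eta_B$ after absorbing the identity $\mu$, so they agree on the nose.

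The main potential obstacle is not any one diagram chase but being careful about exactly which coherence conditions must be verified: since $\mu$ is the copy-cat identity on $\Tl A$ and the strength factors through $\eta \tensor \ident$, every monadic or strength-coherence diagram reduces to an equality between composites of copy-cats and of maps of the form $\eta \tensor \ident$, each of which is a tautological consequence of the bifunctoriality of $\tensor$ and of Proposition~\ref{commprop}. The result is thus essentially the bookkeeping needed to assemble these strict equalities into the standard monad-theoretic formulation.
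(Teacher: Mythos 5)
Your proposal is correct and follows essentially the same route as the paper, which derives the result immediately from the strict equalities of Propositions~\ref{commprop} and~\ref{actprop}, the identity functorial action $\Tl\sigma = \sigma$, and the copy-cat unit of Proposition~\ref{cclevprop}, with $\mu_A$ the identity on $\Tl\Tl A = \Tl A$. You merely spell out the bookkeeping (monad laws, strength, commutativity square) that the paper leaves implicit.
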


\noindent A similar argument to that of Proposition~\ref{cclevprop} yields the following:
\begin{proposition}
\label{natprop}
If $\ell \leq \ell'$, then there is a natural transformation $\iota^{\ell,\ell'}_{A} : \Tl A \rarr T_{\ell'} A$, where each component is the copy-cat strategy.
\end{proposition}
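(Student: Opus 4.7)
The plan is to mirror Proposition~\ref{cclevprop}, replacing the pair $(A, \Tl A)$ by $(\Tl A, \Tp A)$. There are two things to verify: first, that the copy-cat strategy gives a well-defined morphism $\iota^{\ell,\ell'}_A \in \GL(\Tl A, \Tp A)$, i.e.\ that its plays satisfy the Level condition \textbf{(p6)} on $\Tl A \linimpl \Tp A$; second, that the family is natural in $A$.

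For well-definedness, I would work through a typical four-move block of a copy-cat play
\[
\begin{array}{lccc}
& \Tl A & \linimpl & \Tp A \\
O & & & m_{1} \\
P & m_{1} & & \\
O & m_{2} & & \\
P & & & m_{2}
\end{array}
\]
written as $s_{1} m_{1}' m_{1}'' m_{2}'' m_{2}'$, and check \textbf{(p6)} for the two P-moves. For the copy $m_{1}''$ into $\Tl A$: if $m_{1}$ is initial in $A$, its justifier in $\Tl A \linimpl \Tp A$ is the initial move $m_{1}'$ on the right, so the requirement reduces to $\lev_{A}(m_{1}) \sqcup \ell \leq \lev_{A}(m_{1}) \sqcup \ell'$, which holds by the hypothesis $\ell \leq \ell'$. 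If $m_{1}$ is non-initial, then $s \Rest \Tl A \in P_{\Tl A} = P_{A}$ supplies $\lev_{A}(m_{1}) \leq \lev_{A}(\just_{A}(m_{1}))$, and joining $\ell$ on both sides yields the required $\lev_{\Tl A}(m_{1}) \leq \lev_{\Tl A}(\just_{A}(m_{1}))$. The argument for $m_{2}'$ into $\Tp A$ is dual, using $s \Rest \Tp A \in P_{A}$ and monotonicity of $(-) \sqcup \ell'$.

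For naturality, the square $\iota^{\ell,\ell'}_{B} \circ \Tl \sigma = \Tp \sigma \circ \iota^{\ell,\ell'}_{A}$, for any $\sigma : A \rarr B$, is almost formal. The functorial actions $\Tl \sigma$ and $\Tp \sigma$ are, as strategies, both equal to $\sigma$; while each component $\iota^{\ell,\ell'}_{X}$ is a copy-cat; so using the identity laws for copy-cat in $\GG$, both composites coincide with $\sigma$ regarded as a strategy on $\Tl A \linimpl \Tp B$. The one substantive check is that this common composite really is legal in $\GL$, but this is exactly the kind of Level analysis carried out above, applied move-by-move to plays of $\sigma$ rather than to copy-cat plays. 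The main (mild) obstacle throughout is the initial-move subcase in the well-definedness step, which is the one place the hypothesis $\ell \leq \ell'$ is genuinely used; every other inequality is supplied by monotonicity of $\sqcup$, and the only real bookkeeping is to track which of $\ell$ or $\ell'$ is being joined on which side of the linear implication.
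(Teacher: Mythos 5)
Your proof is correct and follows exactly the route the paper intends: the paper's own ``proof'' of this proposition is just the remark that a similar argument to Proposition~\ref{cclevprop} applies, and you have supplied precisely that argument, correctly isolating the initial-move case as the only place the hypothesis $\ell \leq \ell'$ is genuinely needed, with everything else following from monotonicity of $\sqcup$. (One small quibble: in the non-initial cases the restriction you should invoke is the one to the side on which Opponent just moved --- $s \Rest \Tp A$ when checking the copy into $\Tl A$, and $s \Rest \Tl A$ for the dual case --- but since $P_{\Tl A} = P_{\Tp A} = P_{A}$ as sets of plays and both restrictions carry the same underlying $A$-sequence, this mislabelling is immaterial.)
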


\section{Non-Interference Results}

We now turn to the most important aspect of our semantics; the basis it provides for showing that certain kinds of data-access which would violate the constraints imposed by the security levels \emph{cannot in fact be performed}.

Firstly, we prove a strong form of converse of Proposition~\ref{natprop}.
\begin{proposition}
\label{nonatprop}
If $\neg(\ell \leq \ell')$, then there is no natural transformation from $\Tl$ to $T_{\ell'}$.
\end{proposition}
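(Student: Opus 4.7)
The plan is to exhibit a single test game on which any purported natural transformation is forced, by the Level condition \textbf{(p6)}, to be too impoverished to satisfy naturality. The obvious candidate is the flat game $\Nat^{\flat}$ in $\GL$ with constant level $\bot$ on all moves, so that all moves of $T_{\ell}\Nat^{\flat}$ sit at level $\ell$ and all moves of $T_{\ell'}\Nat^{\flat}$ sit at level $\ell'$.

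Suppose $\alpha : \Tl \Rightarrow T_{\ell'}$ is a natural transformation, and consider its component $\alpha_{\Nat^{\flat}} : \Tl\Nat^{\flat} \rarr T_{\ell'}\Nat^{\flat}$, which is a strategy on $\Tl\Nat^{\flat} \linimpl T_{\ell'}\Nat^{\flat}$. The first move in any non-empty play must be the initial O-question $q_{0}$ on the right at level $\ell'$. The Player has exactly two syntactic options for a response: either play a P-question $q_{0}$ on the left (justified by the initial right-hand $q_{0}$), which the Level condition would permit only if $\ell \leq \ell'$, or play a P-answer $n$ on the right. Since by hypothesis $\neg(\ell \leq \ell')$, the first option is forbidden, so $\alpha_{\Nat^{\flat}}$ is either the trivial strategy $\{\varepsilon\}$ or the constant strategy $\underline{k}$ for some $k \in \Nat$ that immediately answers $k$ on the right, ignoring the left-hand component entirely.

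To derive a contradiction, I exploit naturality with respect to the family of constant morphisms $\underline{n} : I \rarr \Nat^{\flat}$ in $\GL$ (well-defined since all moves are at level $\bot$). The component $\alpha_{I} : \Tl I \rarr T_{\ell'} I$ is necessarily $\{\varepsilon\}$ since $I$ has no moves, and it acts as the identity. Naturality therefore reduces to
\[ \alpha_{\Nat^{\flat}} \circ \Tl\underline{n} \; = \; T_{\ell'}\underline{n}, \]
for every $n$. Since $T_{\ell}$ and $T_{\ell'}$ leave the underlying strategy unchanged, both $\Tl\underline{n}$ and $T_{\ell'}\underline{n}$ are the play-set $\{\varepsilon, q_{0}n\}$. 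If $\alpha_{\Nat^{\flat}} = \{\varepsilon\}$, the composition on the left is the trivial strategy, whereas the right-hand side contains the play $q_{0}n$; contradiction. If $\alpha_{\Nat^{\flat}} = \underline{k}$, the composition on the left plays $k$ (ignoring the input), so naturality forces $k = n$ for every $n$, again a contradiction.

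The calculation of compositions is entirely routine, so the only genuinely delicate point is the exhaustive case analysis of P's first move, which hinges on the observation that the would-be justifier of a left-hand question sits at level $\ell'$ while the question itself sits at level $\ell$. That is the one place where the hypothesis $\neg(\ell \leq \ell')$ is used, and it is precisely what blocks any communication between the two sides of the arrow.
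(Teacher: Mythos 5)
Your proof is correct and takes essentially the same route as the paper's: both arguments rest on the observation that, since $\neg(\ell \leq \ell')$, the Level condition \textbf{(p6)} forbids the component at $\Nb$ from ever playing in its domain, and both then use naturality against constants to extract a contradiction. The only (inessential) difference is that you test naturality along the points $\underline{n} : I \rarr \Nat^{\flat}$ after explicitly pinning the component down to $\{\varepsilon\}$ or a constant $\underline{k}$, whereas the paper tests it along the constant endomorphisms $\sigma = \{\varepsilon, q_{0}0\}$ and $\sigma' = \{\varepsilon, q_{0}1\}$ of $\Nb$, using that precomposition with $\Tl\sigma$ cannot affect a strategy that ignores its input.
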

\begin{proof}
Suppose for a contradiction that there is such a natural transformation $\tau$.
Given any flat game $\XF_{\bot}$, with $\lev_{\XF_{\bot}}(m) = \bot$ for all moves $m \in M_{\XF_{\bot}}$,
the strategy $\tau_{\XF_{\bot}} : \Tl \XF_{\bot} \rarr \Tp \XF_{\bot}$ can only play in $\Tp \XF_{\bot}$, since playing the initial move in $\Tl \XF_{\bot}$ would violate the Level condition.

For readability,  in the remainder of the proof we let $A = \Nb$. Now consider the naturality square
\[ \begin{diagram}
\Tl A & \rTo^{\tau_{A}} & \Tp A \\
\dTo^{\Tl \sigma} & & \dTo_{\Tp \sigma} \\
\Tl A & \rTo_{\tau_{A}} & \Tp A \\
\end{diagram}
\]
Since $\tau_{A}$ can only play in $\Tp A$, for all $\sigma, \sigma' : A \rarr A$ we have $\Tl \sigma ; \tau_{A} = \Tl \sigma' ; \tau_{A}$, and hence $\tau_{A} ; \Tp \sigma = \tau_{A} ; \Tp \sigma'$ by naturality. Recall that $\Tp \sigma = \sigma$. But  we can take $\sigma = \{ \varepsilon, q_{0}0 \}$, $\sigma' = \{ \varepsilon, q_{0}1 \}$, and $q_{0}0 \in \tau_{A} ; \Tp \sigma \setminus \tau_{A} ; \Tp \sigma'$, yielding the required contradiction.
\end{proof}

The key step in the above argument was to show that  control  could not flow back from $\Tp \XF_{\bot}$ to the ``context'' $\Tl \XF_{\bot}$ because  its  security level $\ell$ is not below $\ell'$. We shall now extend this idea into an important general principle for the semantic analysis of access control.

\subsection{The No-Flow Theorem}

Consider the following situation. We have a term in context $\Gamma \vdash t:T$, and we wish to guarantee that $t$ is not able to access some part of the context. For example, we may have $\Gamma = x:U, \Gamma'$, and we may wish to verify that $t$ cannot access $x$. Rather than analyzing the particular term $t$, we may wish to guarantee this purely at the level of the types, in which case it is reasonable to assume that this should be determined by the types $U$ and $T$, and independent of $\Gamma'$.

This can be expressed in terms of the categorical semantics as follows. Note that the denotation of such a term in context will be a morphism of the form $f : A \otimes C \rarr B$, where $A = \lsem U \rsem$, $C = \lsem \Gamma'\rsem$, $B = \lsem T \rsem$.

\begin{definition}
Let $\CC$ be an affine category, \ie a symmetric monoidal category in which the tensor unit $I$ is the terminal object. We write $\top_{A} : A \rarr I$ for the unique arrow. We define $A \noflow B$ if for all objects $C$, and $f : A \otimes C \rarr B$, $f$ factors as
\[ f = A \otimes C \rTo^{\top_{A} \otimes \ident{C}} I \otimes C \rTo^{\cong} C \rTo^{g} B . \]
\end{definition}
The idea is that no information from $A$ can be used by $f$ --- it is ``constant in $A$''.
Note that $\GL$ and $\GHL$ are affine, so this definition applies directly to our situation.

Firstly, we characterize this notion in $\GL$ and $\GHL$.
\begin{lemma}
\label{noflemm}
In $\GL$ and $\GHL$, $A \noflow B$ if and only if, for any strategy $\sigma : A \otimes C \rarr B$, $\sigma$ does not play any move in $A$.
\end{lemma}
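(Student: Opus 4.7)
The statement is a biconditional, so I plan to handle each direction separately by unfolding the categorical factorization concretely in the game model. The key observation driving both directions is that $M_I = \varnothing$ and $\Init_I = \varnothing$, so the game $A \linimpl I$ has empty move set and the unique morphism $\top_A : A \rarr I$ is the strategy $\{\varepsilon\}$.

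For the $(\Rightarrow)$ direction, fix any $\sigma : A \otimes C \rarr B$. By hypothesis $A \noflow B$, so $\sigma$ factors as $\sigma = (\top_A \otimes \ident{C}) ; {\cong} ; g$ for some $g : C \rarr B$. Since $\top_A$ contributes no moves, the strategy $\top_A \otimes \ident{C}$, viewed on $(A \otimes C) \linimpl (I \otimes C)$, plays no moves on the $A$ component (it is just copy-cat on the two $C$-sides). Unwinding the composition formula $\sigma;\tau = \{ s \Rest X, Z \mid s \in \sigma \| \tau \}$, any interaction sequence $s$ is empty on the $A$-summands, so after restriction to $(A \otimes C), B$ no $A$-moves survive.

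For $(\Leftarrow)$, suppose every strategy $\sigma : A \otimes C \rarr B$ avoids $A$-moves, and fix such a $\sigma$. The move set of $(A \otimes C) \linimpl B$ is $\Sigma_{b \in \Init_B}(M_A + M_C) + M_B$, and removing the (unused) $A$-summands yields exactly $M_{C \linimpl B} = \Sigma_{b \in \Init_B} M_C + M_B$. Thus each play of $\sigma$ reinterprets as a play in $C \linimpl B$, with labelling, justification, and $\Deq$ all restricting correctly (no $C$- or $B$-move is justified by an $A$-move, and the equivalence is componentwise). Define $g$ to be the resulting set of plays; Causal Consistency, Representation Independence and Determinacy transfer verbatim from $\sigma$. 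To verify $\sigma = (\top_A \otimes \ident{C}) ; {\cong} ; g$, compute the right-hand side: the composite plays nothing on $A$, copy-cat on the hidden $C$-channel, and follows $g$ on the external $C$ and $B$ components, so its restriction to $(A \otimes C), B$ agrees with $\sigma$ play-for-play.

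The main obstacle is the bookkeeping for the $\Init_B$-indexed copies of $A$- and $C$-moves in the linear implication: one has to check that the inclusion $P_{C \linimpl B} \hookrightarrow P_{(A \otimes C) \linimpl B}$ (into those plays avoiding $A$) preserves labelling, justification and $\Deq$ on the nose, so that the strategy axioms and the factorization identity really transfer. The $\GHL$ case poses no extra difficulty: if a history-free move-function generates $\sigma$, the same function (ignoring its action on the unreachable $A$-moves) generates $g$, and the composite strategies built from copy-cats and $g$ remain history-free.
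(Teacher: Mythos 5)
Your proof is correct and follows essentially the same route as the paper's much terser argument: the forward direction rests on the observation that $\top_{A} \otimes \ident{C}$ can never move in $A$ because $I$ has no moves, and the converse reinterprets an $A$-avoiding strategy as a strategy on $C \linimpl B$ through which $\sigma$ then factors. The additional bookkeeping you supply (checking that labelling, justification and $\Deq$ restrict correctly to the $A$-free plays, and the history-free case) is a legitimate elaboration of what the paper compresses into ``it is well-defined as a strategy $C \rarr B$, and so it essentially factors through itself.''
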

\begin{proof}
This reduces to verifying that $\sigma$ factors if and only if it plays no move in $A$. Certainly, if it factors it plays no move in $A$, since any such move in the composition must be preceded by one in $I$, which has none. Conversely, if it plays no move in $A$, then it is well-defined as a strategy $\sigma : C \rarr B$, and so it essentially factors through itself.
\end{proof}

We now give a simple characterization for when this ``no-flow'' relation holds between games.

Given a game $A$, we define:
\[ \begin{array}{lcl}
\Lev(A) & = & \{ \lev_{A}(m) \mid m \in \Init_{A} \} \\
A \rhd B & \equiv & \forall \ell \in \Lev(A), \ell' \in \Lev(B) . \, \neg(\ell \leq \ell')
\end{array}
\]

\begin{theorem}[No-Flow]
\label{nofth}
For any games $A$, $B$ in $\GL$:
\[ A \noflow B \;\; \Longleftrightarrow \;\; A \rhd B . \]
\end{theorem}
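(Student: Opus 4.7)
The plan is to apply Lemma~\ref{noflemm}, which reduces the claim to showing that $A \rhd B$ holds precisely when no strategy $\sigma : A \otimes C \rarr B$ (for any object $C$) plays any move in the $A$-component of $(A \otimes C) \linimpl B$.

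For the forward direction, assume $A \rhd B$ and suppose for contradiction that some strategy $\sigma$ contains a play $s$ with a move in the $A$-component. Consider the first such move in $s$; it has the form $m'_b$ for some $m' \in M_A$ and $b \in \Init_B$. If $m'$ were non-initial in $A$, then by the definition of justification in the linear implication together with the componentwise definition in the tensor, the justifier of $m'_b$ in the compound game would be $(\just_A(m'))_b$, again a move in the $A$-component; by (p5) this justifier would have to occur earlier in $s$, contradicting the choice of $m'_b$ as the first. Hence $m' \in \Init_A$, and so its justifier in the compound game is the initial move $b \in \Init_B$ itself. Condition (p6) now gives $\lev_A(m') = \lev(m'_b) \leq \lev(b) = \lev_B(b)$, directly contradicting $A \rhd B$.

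For the converse, assume $\neg(A \rhd B)$ and fix $a \in \Init_A$, $b \in \Init_B$ with $\lev_A(a) \leq \lev_B(b)$. Take $C = I$, so $A \otimes I \cong A$, and work in $A \linimpl B$. The length-two sequence $b \cdot a_b$ is a legitimate play: alternation, linearity, and initialization are immediate; well-bracketing is trivial since initial moves must be questions (an answer requires a justifying question, which an initial move lacks); $\just_{A \linimpl B}(a_b) = b$ precedes $a_b$, verifying (p5); and the Level condition (p6) is exactly the assumed $\lev_A(a) \leq \lev_B(b)$. Define
\[ \sigma \;=\; \{\varepsilon\} \;\cup\; \{\, t \in \even{P_{A \linimpl B}} \mid t \Deq_{A \linimpl B} b \cdot a_b \,\} . \]
Any nonempty $tab' \in \sigma$ has length $2$ by (e1), so $t = \varepsilon \in \sigma$, giving Causal Consistency; Representation Independence is immediate from the definition; and Determinacy holds because any two nonempty elements of $\sigma$ are both $\Deq$-equivalent to $b \cdot a_b$, hence to each other. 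Since $\sigma$ plays a move in the $A$-component, $A \noflow B$ fails by Lemma~\ref{noflemm}.

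The main obstacle is producing a legitimate strategy in the converse without invoking the skeleton machinery of Section~2. A naive set like $\{\varepsilon, b \cdot a_b\}$ would fail Representation Independence whenever $\Deq_{A \linimpl B}$ is nontrivial on length-two plays; closing under $\Deq$ at the outset sidesteps this and leaves only the three strategy axioms to verify.
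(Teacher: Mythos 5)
Your proof is correct and follows essentially the same route as the paper's: both directions reduce to Lemma~\ref{noflemm}, the forward direction by observing that the first move in $A$ must be initial and hence justified by an initial move of $B$ in violation of \textbf{(p6)}, and the converse by exhibiting a two-move strategy $b \cdot a_b$. Your only departure is closing that two-move play under $\Deq_{A \linimpl B}$ to secure Representation Independence --- a small but genuine tightening of the paper's bare $\{\varepsilon, m'm\}$.
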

\begin{proof}
If $A \rhd B$, then any strategy $\sigma : A \otimes C \rarr B$ cannot play a move in $A$. The first such move would be an initial move in $A$, which would be justified by an initial move in $B$, and this would violate the Level condition since $A \rhd B$.

Conversely, suppose there are initial moves $m$ in $A$ and $m'$ in $B$ such that $\lev_{A}(m) \leq \lev_{B}(m')$. Then for any $C$, $\sigma = \{ \varepsilon, m'm \}$ is a strategy $\sigma : A \otimes C \rarr B$ which moves in $A$.
\end{proof}

\subsection{Computing Levels}

The characterization of no-flow in terms of the levels of types means that we can obtain useful information by computing levels.

We consider a syntax of types built from basic types (to be interpreted as flat games at a stipulated level) using the connectives of ILL extended with the level monads. For any such type $T$, we can give a simple inductive definition of $\Lev(A)$ where $A = \lsem T \rsem$:
\[ \begin{array}{lcl}
\Lev(\XF_{\ell}) & = & \{ \ell \} \\
\Lev(I) & = & \varnothing \\
\Lev(A \otimes B) & = & \Lev(A) \cup \Lev(B) \\
\Lev(A \linimpl B) & = & \Lev(B) \\
\Lev(A \with B) & = & \Lev(A) \cup \Lev(B) \\
\Lev(A \Rightarrow B) & = & \Lev(B) \\
\Lev(!A) & = & \Lev(A)\\
\Lev(\Tl A) & = & \{ \ell \sqcup \ell' \mid \ell' \in \Lev(A) \}
\end{array}
\]

\noindent This yields a simple, computable analysis which by Theorem~\ref{nofth} can be used to guarantee access constraints of the kind described above.

\subsection{Protected Types}
We give a semantic account of \emph{protected types}, which play a key r\^ole in the DCC type system \cite{AbadiBHR99}.

\begin{definition}
We say that a game $A$ is \emph{protected at level $\ell$} if $\Lev(A) \geq \ell$, meaning that $\ell' \geq \ell$ for all $\ell' \in \Lev(A)$.
\end{definition}
This notion extends immediately to types via their denotations as games.

The following (used as an inductive definition of protection in \cite{AbadiBHR99,Abadi06}) is an immediate consequence of the  definition.
\begin{lemma}
\begin{enumerate}
\item If $\ell \leq \ell'$, then $\Tp A$ is protected at level $\ell$.
\item If $B$ is protected at level $\ell$, so are $A \linimpl B$ and $A \Rightarrow B$.
\item If $A$ and $B$ are protected are level $\ell$, so are $A \with B$ and $A \otimes B$.
\item If $A$ is protected at level $\ell$, so is $\shr A$.
\item $I$ is protected at level $\ell$.
\end{enumerate}
\end{lemma}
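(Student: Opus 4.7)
The plan is to prove each clause by direct inspection of the inductive definition of $\Lev$ given immediately above the lemma. Since protection at level $\ell$ is defined as the assertion $\ell \leq \ell''$ for every $\ell'' \in \Lev(A)$, each clause reduces to a one-line semilattice computation once we substitute the formula for $\Lev$ on the relevant type constructor. So the structure of the proof is simply: unfold the definition of $\Lev$ on the outer connective, then verify the universal statement.

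For clauses (2)--(5) the hypothesis carries over without using any further lattice inequality. Concretely, $\Lev(A \linimpl B) = \Lev(A \Rightarrow B) = \Lev(B)$, so a lower bound on $\Lev(B)$ is \emph{literally} a lower bound on $\Lev$ of the function type; $\Lev(A \otimes B) = \Lev(A \with B) = \Lev(A) \cup \Lev(B)$, so a common lower bound on both summands is a lower bound on the union; $\Lev(\shr A) = \Lev(A)$, which is trivial; and $\Lev(I) = \varnothing$, for which the universal quantifier is vacuous.

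The only clause where the hypothesis is used nontrivially is (1). Here $\Lev(\Tp A) = \{\,\ell' \sqcup \ell'' \mid \ell'' \in \Lev(A)\,\}$, so any element of $\Lev(\Tp A)$ satisfies $\ell' \sqcup \ell'' \geq \ell'$, and composing with the assumed inequality $\ell \leq \ell'$ yields $\ell \leq \ell' \sqcup \ell''$. Note that no assumption on $A$ itself is needed: the monad $\Tp$ raises every level to at least $\ell'$, which is precisely why it protects.

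I do not anticipate any genuine obstacle. The inductive clauses defining $\Lev$ have been arranged so that the closure properties listed in the lemma come out by inspection; indeed, this is exactly the semantic counterpart of the syntactic inductive definition of protection in DCC, and the lemma exists primarily to record that the games model validates it. The only point deserving a sentence of comment in the written proof is why $\Lev(A \linimpl B) = \Lev(B)$ suffices for (2) despite the contravariant occurrence of $A$ --- but this is already built into the definition of $\Lev$ via the choice to track only the levels of initial (hence, by the shape of $\linimpl$, output-side) moves.
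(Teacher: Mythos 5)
Your proposal is correct and matches the paper's intent exactly: the paper offers no explicit proof, stating only that the lemma is ``an immediate consequence of the definition,'' and your clause-by-clause unfolding of the inductive equations for $\Lev$ (with the union, identity, vacuous, and $\ell \leq \ell' \leq \ell' \sqcup \ell''$ cases) is precisely the verification being elided. Nothing further is needed.
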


We also have the following \emph{protected promotion} lemma, which shows the soundness of the key typing rule in DCC \cite{AbadiBHR99}.
\begin{lemma}
If $\sigma : \shr A \rarr \Tl B$, $\tau : \shr B \rarr  C$, and $C$ is protected at level $\ell$, then the coKleisli composition
\[ \sigma^{\dagger} ; \tau : \shr A \rarr C \]
is well-defined.
\end{lemma}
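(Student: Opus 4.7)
The plan is to verify that the composite strategy lives on $\shr A \linimpl C$ and satisfies the Level condition \textbf{(p6)}. The subtle point is a type mismatch: we have $\sigma^{\dagger} : \shr A \rarr \shr \Tl B$, and by Proposition~\ref{commprop} $\shr \Tl B = \Tl \shr B$, so $\sigma^{\dagger}$ may equally well be viewed as a strategy on $\shr A \linimpl \Tl \shr B$; whereas $\tau$ lives on $\shr B \linimpl C$. The two middle games $\Tl \shr B$ and $\shr B$ share the same moves, justification function, and positions (conditions \textbf{(p1)}--\textbf{(p5)}), differing only in that the levels of the $B$-moves are raised by $\ell$ on the left-hand side. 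Hence the usual parallel-composition-with-hiding construction produces a set of plays on $\shr A \linimpl C$; what must be checked is that \textbf{(p6)} survives the gluing.

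I would argue by case analysis on a non-initial move $m$ occurring in a composite play. If $m \in C$ is non-initial, then $\just(m)$ lies in $C$ and the inequality $\lev_{C}(m) \leq \lev_{C}(\just(m))$ transfers directly from $\tau$'s legality on $\shr B \linimpl C$. If $m$ is a non-initial move in $\shr A$, then $\just(m) \in \shr A$ and the inequality transfers from $\sigma^{\dagger}$'s legality on $\shr A \linimpl \Tl \shr B$, since $\Tl$ does not alter the levels of $\shr A$-moves.

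The critical case, in which the protection hypothesis is used, concerns an initial move $m$ of $\shr A$ appearing in the composite play on $\shr A \linimpl C$; such an $m$ is justified there by some initial move $c$ in $C$. In the intermediate play on $(\shr A \linimpl \text{middle}) \linimpl C$, $m$ was justified by an initial move $b$ in the middle, and $b$ was in turn justified by $c$. From $\sigma^{\dagger}$ on $\shr A \linimpl \Tl \shr B$ one has $\lev_{\shr A}(m) \leq \lev_{B}(b) \sqcup \ell$, and from $\tau$ on $\shr B \linimpl C$ one has $\lev_{B}(b) \leq \lev_{C}(c)$. Since $C$ is protected at $\ell$, $\ell \leq \lev_{C}(c)$, so $\lev_{C}(c) \sqcup \ell = \lev_{C}(c)$, and hence $\lev_{\shr A}(m) \leq \lev_{B}(b) \sqcup \ell \leq \lev_{C}(c)$, which is exactly what \textbf{(p6)} requires.

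The main obstacle is keeping track of the two different level assignments on the hidden middle and recognising that the single assumption of $\ell$-protection for $C$ is precisely what is needed to absorb the extra $\ell$ contributed by passing through $\Tl \shr B$; once this is clear, the remaining cases reduce to conditions already guaranteed by $\sigma$ and $\tau$ individually.
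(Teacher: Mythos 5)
Your proof is correct, and it turns on exactly the same observation as the paper's: protection of $C$ at level $\ell$ supplies the inequality $\ell \leq \lev_{C}(c)$ for initial moves $c$ of $C$, which is what absorbs the extra $\ell$ contributed by $\Tl$. The argument is organised differently, though. The paper never inspects the composite interaction: after noting $\Tl \shr B = \shr \Tl B$, it shows that $\tau$ itself remains a legal strategy when retyped at $\Tl \shr B \rarr C$ --- the only condition at risk is \textbf{(p6)} for initial moves $m$ of the middle game, where $\lev_{\shr B}(m) \leq \lev(\just(m))$ together with $\ell \leq \lev(\just(m))$ gives $\lev_{\shr B}(m) \sqcup \ell \leq \lev(\just(m))$. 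Once $\tau$ is retyped, $\sigma^{\dagger};\tau$ is ordinary categorical composition in $\GL$ and nothing more needs checking. You instead leave the middle games mismatched, form the interaction directly (legitimate, since $P_{\Tl \shr B} = P_{\shr B}$), and verify \textbf{(p6)} on the hidden composite; your critical case is an initial $\shr A$-move, one step further along the justification chain, and your inequality $\lev(m) \leq \lev_{B}(b) \sqcup \ell \leq \lev_{C}(c)$ is the same arithmetic. The paper's version buys modularity: well-definedness of the composite (determinacy, representation independence, and the remaining play conditions) comes for free from the already-established category structure, whereas your route has to take on trust that parallel composition with hiding over position-identical middle games yields a strategy. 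That gap is easily closed by exactly the paper's retyping observation, so both arguments are sound.
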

\begin{proof}
Firstly, by Proposition~\ref{commprop}, $\Tl \shr B = \shr \Tl B$.
So it suffices to show that $\tau$ is well-defined as a strategy $\tau : \Tl \shr B \rarr C$.
If we consider an initial move $m$ in $\Tl \shr B$ played by $\tau$, we must have $\lev_{\shr B}(m) \leq \lev(\just(m))$ since $\tau : \shr B \rarr C$ is well-defined. Moreover, $\ell \leq  \lev(\just(m))$ since $C$ is protected at $\ell$. Hence $\lev_{\Tl \shr B}(m) \leq \lev(\just(m))$.
\end{proof}

\subsection{Stability Under Erasure}

We now give a semantic version of the main result in \cite{Abadi06} (Theorem 7.6), which shows stability of the type theory under erasure of level constraints. This  is used in \cite{Abadi06} to derive several other results relating to non-interference.

Firstly, given $\ell \in \LL$, we define the erasure $A^{\ell}$ of a type $A$, which replaces every sub-expression of $A$ of the form $\Tp B$, with $\ell' \geq \ell$, by $\top$. Semantically, this corresponds to erasing all moves $m$ in the game (denoted by) $A$ such that $\lev(m) \geq \ell$, and all plays containing such moves.

Abadi's result is that, if we can derive a typed term in context $\Gamma \vdash e : A$, then we can derive a  term $\Gamma^{\ell} \vdash e' : A^{\ell}$.
To obtain an appropriate semantic version, we need to introduce the notion of \emph{total} strategies.
A strategy $\sigma$ is total if when $s \in \sigma$, and $sa \in P_{A}$, then $sab \in \sigma$ for some $b$. This is the direct analogue of totality for functions, and will hold for the strategies denoted by terms in a logical type theory --- although not in general for terms in a programming language equipped with general recursion. One can show that total strategies which are \emph{finite} (or alternatively \emph{winning}) in a suitable sense form a category with the appropriate structure to model intuitionistic and linear type theories \cite{Abr97,Hyl97}.

\begin{theorem}
\label{staberasth}
Suppose that $\sigma : A \rarr B$ is a total strategy. Then so is $\sigma' : A^{\ell} \rarr B^{\ell}$ for any
$\ell \in \LL$, where $\sigma'$ is the restriction of $\sigma$ to plays in $A^{\ell} \linimpl B^{\ell}$.
\end{theorem}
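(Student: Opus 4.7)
My plan is to show that $\sigma'$ inherits strategy-hood from $\sigma$ essentially for free, and that totality survives the restriction because the Level condition \textbf{(p6)} prevents $\sigma$ from ever producing a P-response of level $\geq \ell$ from a position whose moves all have level $\not\geq \ell$.

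The set-up is routine. Semantically, $M_{A^\ell \linimpl B^\ell}$ is a subset of $M_{A \linimpl B}$ with level inherited, and $P_{A^\ell \linimpl B^\ell}$ is precisely the prefix-closed collection of plays in $P_{A \linimpl B}$ that avoid all moves $m$ with $\lev(m) \geq \ell$; the equivalence $\Deq_{A^\ell \linimpl B^\ell}$ is likewise the natural restriction of $\Deq_{A \linimpl B}$. Consequently $\sigma' = \sigma \cap \even{P_{A^\ell \linimpl B^\ell}}$ contains $\varepsilon$ and inherits Causal Consistency, Representation Independence, and Determinacy from $\sigma$.

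The substantive step is totality. Given $s \in \sigma'$ and $sa \in P_{A^\ell \linimpl B^\ell}$, totality of $\sigma$ in $A \linimpl B$ supplies a $b$ with $sab \in \sigma$, and it suffices to check that $b$ is a legal move of $A^\ell \linimpl B^\ell$. Since $|s|$ is even, $b$ is a P-move; by \textbf{(p1)} all initial moves of $A \linimpl B$ lie in $\Init_B$ and are O-moves, so $b$ is non-initial and $\just(b)$ is defined. By \textbf{(p5)}, $\just(b)$ occurs in $sa$, hence $\lev(\just(b)) \not\geq \ell$; by \textbf{(p6)}, $\lev(b) \leq \lev(\just(b))$, so $\lev(b) \not\geq \ell$ as well. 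If $b = m_{b_0}$ lies in the $A$-component, iterating $\just$ along the justification chain of $b$ reaches the initial tag $b_0 \in \Init_B$ while staying inside $sa$ (by repeated application of \textbf{(p5)}), giving $b_0 \in \Init_{B^\ell}$. Either way $b \in M_{A^\ell \linimpl B^\ell}$, so $sab \in P_{A^\ell \linimpl B^\ell}$ and hence $sab \in \sigma'$.

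The conceptual crux, and the only real technical point, is exactly this propagation argument: the Level condition forces any P-response to be at most as high as its justifier, so $\sigma$ cannot ``leak'' a high-level response into a context whose visible history all sits below $\ell$. Everything else reduces to the generic fact that intersecting $\sigma$ with a prefix-closed, $\Deq$-closed set of even-length plays automatically yields a strategy.
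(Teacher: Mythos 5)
Your argument is correct and is essentially the paper's own proof read contrapositively: both hinge on combining Justification \textbf{(p5)} with the Level condition \textbf{(p6)} to bound $\lev(b)$ by $\lev(\just(b))$, which already occurs in the erased play $sa$ and hence cannot have level $\geq \ell$. The extra checks you supply (that $\sigma'$ inherits the strategy conditions, and that the $\Init_B$-tag of an $A$-move survives erasure) are routine details the paper leaves implicit.
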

\begin{proof}
Suppose for a contradiction that $\sigma'$ is not total, and consider a  witness $sab \in \sigma \setminus \sigma'$, with $sa \in P_{A^{\ell} \linimpl B^{\ell}}$. Then $\lev(b) \geq \ell$; but by the Level constraint, we must have $\lev(\just(b)) \geq \ell$, which by the Justification condition contradicts $sa \in P_{A^{\ell} \linimpl B^{\ell}}$.
\end{proof}

\section{Semantic vs. Syntactic Non-Interference}
\label{semsynsec}

Our primary emphasis in this paper is on a semantic approach to access control, and we have proved semantic versions of a number of non-interference results. A detailed analysis of how these relate to the results proved by syntactic and proof-theoretic means for calculi such as DCC would take us too far afield. However, we shall provide an illustrative example of how semantic non-interference results can be used to obtain results about syntactic calculi.

For a simple and paradigmatic example, we consider a core fragment of DCC, obtained by extending the simply-typed $\lambda$-calculus with the level monads. There are two typing rules associated with the monads:
\[ \frac{\Gamma \vdash e : A}{\Gamma \vdash \eta_\ell e : A} \qquad \qquad
\frac{\Gamma \vdash e : T_{\ell} A \quad \Gamma , x : A \vdash e' : B}{\Gamma \vdash \bind : B}  \;\; \mbox{$B$ is protected at level $\ell$}
\]
The term rewriting rules, in addition to the usual $\beta$-reduction and $\eta$-expansion, are
\[ \eta_\ell e \longrightarrow e \qquad \qquad \bind \longrightarrow e'[e/x] \]
Thus the normal forms in this term calculus will be the usual long $\beta\eta$-normal forms of the simply typed $\lambda$-calculus. We say that a proof \emph{uses an assumption $x:A$} if the term corresponding to the proof contains $x$ free in its normal form.

It follows from the results in Section~3 that our game semantics provides a sound model for this calculus.
We have the following simple result.
\begin{proposition}
Let $\Gamma, x: A \vdash t:B$ be a term in context of core DCC, where $t$ is in long $\beta\eta$-normal form. Then $x$ occurs free in $t$ if and only if the strategy it denotes moves in $\lsem A\rsem$.
\end{proposition}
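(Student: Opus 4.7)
The plan is to prove the two directions separately. The ``only if'' direction (no free occurrence of $x$ implies no move in $\lsem A \rsem$) is a routine compositionality/weakening argument, while the ``if'' direction requires an induction on the structure of long $\beta\eta$-normal forms, exploiting the fact that in such a form every free occurrence of $x$ sits at the head of some subterm.

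For the ``only if'' direction, suppose $x$ does not occur free in $t$. Then by weakening there is a derivation $\Gamma \vdash t : B$ with the same underlying term, and the denotation of the original judgement factors as the weakening morphism $\lsem \Gamma, x{:}A \rsem = \lsem\Gamma\rsem \otimes \lsem A \rsem \xrightarrow{\ident{} \otimes \top} \lsem \Gamma \rsem \otimes I \cong \lsem \Gamma \rsem$ followed by $\lsem \Gamma \vdash t : B \rsem$. This is precisely the factorization appearing in the definition of $\noflow$, and by Lemma~\ref{noflemm} it means the composite strategy plays no move in $\lsem A \rsem$.

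For the ``if'' direction, I would proceed by induction on the long $\beta\eta$-normal form $t$. Such a form is an abstraction $\La \vec{y}.\, t'$ at arrow types, and at non-arrow types takes one of the shapes $h\, t_1 \cdots t_n$ (with $h$ a variable), $\eta_\ell e$, or $\bind$. Abstractions only relocate moves between context and codomain, and the monadic unit $\eta_\ell$ leaves the underlying strategy unchanged, so in both cases the inductive hypothesis transfers unchanged. In the head-application case $h\, t_1 \cdots t_n$, either $h = x$ and the interpretation, after being prompted by an Opponent initial move in $\lsem B \rsem$, begins by playing the corresponding initial move in the $x$-component of the context (which lives in $\lsem A \rsem$ by construction); or $h \neq x$ and $x$ occurs free in some $t_i$, in which case induction supplies a move of $\lsem A \rsem$ in the sub-strategy, and this move is preserved by the substitution-as-composition that interprets the application.

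The main obstacle will be the $\bind$ case, where $x$ may occur free in $e$ or $e'$. If $x$ is free in $e$, the claim is immediate since the interpretation of bind runs the strategy for $e$ first and any move it makes in $\lsem A \rsem$ survives. If $x$ is free only in $e'$, one must check that the composition used to interpret bind does not elide moves in $\lsem A \rsem$ that the continuation would perform; this holds because the context component carrying $\lsem A \rsem$ is shared between the two sub-strategies and is not consumed or hidden by the monad. A secondary point to be careful about is confirming that long $\beta\eta$-normality really does force every free occurrence of $x$ to appear as a head, so that the inductive case analysis above is exhaustive --- this is a standard property of long normal forms but should be stated explicitly before running the induction.
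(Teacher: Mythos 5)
Your proposal is essentially sound, and for the direction ``$x$ free implies a move in $\lsem A\rsem$'' it is the paper's own argument recast as an induction: the paper simply observes that a head occurrence of $x$ can be activated by suitable choices of O-moves, whereupon the copy-cat interpretation of the variable produces a move in $\lsem A\rsem$. Two remarks. First, your direction labels are swapped: the ``only if'' half of the statement is ``$x$ free $\Rightarrow$ moves in $\lsem A\rsem$'', which is the half you prove by induction. Second, and more substantively, your treatment of the other direction genuinely differs from the paper's: the paper proves ``moves $\Rightarrow$ free'' by induction on normal forms, whereas you invoke the semantic weakening lemma and the factorization from the definition of $\noflow$ together with Lemma~\ref{noflemm}. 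This buys a statement valid for arbitrary terms, not just normal ones (normality is only needed in the forward direction, where a $\beta$-redex could erase $x$), at the price of having the coherence of weakening available as a prior lemma --- which is standard in this categorical setting.

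Two cautionary points on your induction. In the application case $h\,t_1\cdots t_n$ with $h\neq x$ and $x$ free in $t_i$, the phrase ``preserved by the substitution-as-composition'' is the one place where something must actually be checked: the sub-strategy $\lsem t_i\rsem$ is entered only if the $i$-th argument of $h$ is interrogated, but since $h$ is a free variable its component of the context is Opponent-controlled, so Opponent can steer the interaction into $\lsem t_i\rsem$; and the resulting move in $\lsem A\rsem$ lies in the shared context, which composition does not hide. This is exactly the paper's ``appropriate choices of O-moves'', and it is cleanest to build it into the induction hypothesis (some play of the denotation, under suitable O-moves, contains a move in $\lsem A\rsem$). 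Finally, your anticipated ``main obstacle'' evaporates: in core DCC both $\eta_\ell e$ and the bind construct are themselves redexes ($\eta_\ell e \longrightarrow e$, and bind reduces to $e'[e/x]$), so long $\beta\eta$-normal forms are exactly those of the simply typed $\lambda$-calculus and neither of these cases arises.
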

\begin{proof}
Given an occurrence of $x$ as a head variable in some sub-term of $t$, we can construct a play with appropriate choices of O-moves, which will ``activate'' this variable, whose denotation plays a copy-cat strategy with the occurrence of $x$ in the context, thus generating a move in $A$ as required.
The converse is easily shown by induction on normal forms.
\end{proof}
Combining this with Lemma~\ref{noflemm} and the No-Flow Theorem~\ref{nofth}, we immediately obtain:
\begin{proposition}
If $A \rhd B$, any derivation of $\Gamma, x: A \vdash t:B$ cannot use the assumption $x:A$.
\end{proposition}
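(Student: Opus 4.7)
The plan is simply to chain together the three results immediately preceding the statement: the soundness of the game semantics for core DCC (from Section~3), the previous proposition relating free occurrences of $x$ in the normal form of $t$ to moves in $\lsem A \rsem$, and the No-Flow characterization via Lemma~\ref{noflemm} and Theorem~\ref{nofth}.

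First I would observe that, by soundness, the term in context $\Gamma, x:A \vdash t:B$ denotes a strategy of the form
\[ \sigma \; : \; \lsem A \rsem \otimes \lsem \Gamma \rsem \;\rarr\; \lsem B \rsem \]
in $\GHL$, where $\lsem \Gamma \rsem$ plays the role of the auxiliary object $C$ in the definition of $\noflow$. The assumption $A \rhd B$, carried through to the denotations, gives $\lsem A \rsem \rhd \lsem B \rsem$, so by the No-Flow Theorem~\ref{nofth} we obtain $\lsem A \rsem \noflow \lsem B \rsem$.

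Next, by Lemma~\ref{noflemm}, no strategy from $\lsem A \rsem \otimes \lsem \Gamma \rsem$ to $\lsem B \rsem$ plays any move in $\lsem A \rsem$; in particular $\sigma$ does not. Without loss of generality we may assume $t$ is in long $\beta\eta$-normal form (rewriting preserves denotations, and the normal form is unique). Applying the previous proposition contrapositively, since the denoting strategy plays no move in $\lsem A \rsem$, the variable $x$ cannot occur free in the normal form of $t$. By the definition given just before that proposition (``a proof uses an assumption $x:A$ if the corresponding term contains $x$ free in its normal form''), the derivation does not use the assumption $x:A$.

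No step here is really the main obstacle: the substance of the argument has already been discharged in the No-Flow Theorem and in the previous proposition. The only point that requires a small amount of care is the appeal to soundness together with the fact that we are free to replace $t$ by its long $\beta\eta$-normal form before applying the previous proposition; this is legitimate because the rewriting rules listed (ordinary $\beta$, $\eta$, the monad laws for $\eta_\ell$ and $\mathbf{bind}$) are all validated by the semantics, so they preserve the underlying strategy, and hence preserve the property of playing (or not playing) moves in $\lsem A \rsem$.
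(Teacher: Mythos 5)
Your argument is exactly the paper's: the proposition is stated there as an immediate consequence of combining the preceding proposition (free occurrence of $x$ in the normal form iff the denoting strategy moves in $\lsem A \rsem$) with Lemma~\ref{noflemm} and the No-Flow Theorem~\ref{nofth}. Your additional care about soundness and the legitimacy of passing to the long $\beta\eta$-normal form is a correct and welcome elaboration of what the paper leaves implicit.
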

Suitable adaptations of this argument to other type theories will yield corresponding non-interference results.

\section{Further Directions}

We have shown that Game Semantics provides a natural setting for the semantic analysis of access control. There are many further directions for this work:
\begin{itemize}
\item We have considered a semantic setting which is adequate for both intuitionistic and (intuitionistic-)linear type theories. It would also be interesting to look at access control in the context of \emph{classical type theories} such as $\lambda\mu$ \cite{Par92}, particularly since it is suggested in \cite{Abadi06,GP06} that there are problems with access control logics in classical settings. There have been some studies of game semantics for classical type theories \cite{Ong96,Lau04}. It would be of considerable interest to see if our approach could carry over to the classical case.
\item There are a number of other natural extensions, such as to polymorphic types.
\item It would also be of interest to develop the applications to dependency analysis for programming languages. The same game semantics framework provides a common basis for this and the study of logical type theories.
\item The development of algorithmic game semantics \cite{GM00,Abr2002a,MOW05,MW05}, including several implemented verification tools \cite{BG07,DL07,LMOW}, suggests that it may be promising to look at automated analysis based on our semantic approach.
\item We have developed our semantics in the setting of AJM games, equipped with a notion of justification. One could alternatively take HO-games as the starting point, but these would also have to be used in a hybridized form, with ``AJM-like'' features, in order to provide models for linear type theories \cite{McC98}. In fact, one would like a form of game semantics which combined the best features (and minimized the disadvantages) of the two approaches. Some of the ideas introduced in the present paper may be useful steps in this direction.
\end{itemize}

\bibliographystyle{plain}

\end{document}